\def\cL{{\mathcal {L}}}
\def\cV{{\mathcal{V}}}
\def\cA{{\mathcal{A}}}
\def\bR{{\mathbb {R}}}
\def\cG{{\mathcal {G}}}
\def\cE{{\mathcal {E}}}
\newtheorem{theorem}{Theorem}[section]
\newtheorem{proposition}[theorem]{Proposition}
\newtheorem{assumption}[theorem]{Assumption}
\newtheorem{remark}[theorem]{Remark}
\title{Network-Based Optimal Control of Pollution Growth}
\author[1]{Fausto Gozzi\thanks{fgozzi@luiss.it}}
\author[2]{Marta~Leocata\thanks{mleocata@luiss.it}}
\author[3]{Giulia Pucci\thanks{pucci@kth.se \\ \\ 
All the authors have equally contributed to the paper. \\
\\
Fausto Gozzi and Marta Leocata are supported by the Italian
Ministry of University and Research (MIUR), in the framework of PRIN projects 2017FKHBA8 001 \textit{(The Time-Space Evolution of Economic Activities: Mathematical Models and Empirical Applications)} and  20223PNJ8K \textit{(Impact of the Human Activities on the Environment and
Economic Decision Making in a Heterogeneous Setting:
Mathematical Models and Policy Implications)}. Giulia Pucci is supported by the Swedish Research Council grant (2020-04697).}}
\affil[1]{Department of Economics and Finance, Luiss University, Rome, Italy}
\affil[2]{Department of Economics and Finance, Luiss University, Rome, Italy.}
\affil[3]{Department of Mathematics, KTH Royal Institute of Technology, Stockholm, Sweden}
\date{\today}
\begin{document}

\maketitle

\begin{abstract}
This paper studies a model for the optimal control of pollution diffusion over time and space by a centralized economic agent.
The controls are the investments in two types of production: a less polluting ("green") technology and a more polluting ("brown") one. The goal is to maximize an intertemporal utility function which takes into account the cost of pollution.
The main novelty is the fact that the spatial component has a network structure. Moreover, in such a time-space setting, we analyze the trade-off between the use of green and brown technologies: this is also a novelty in such a setting.
Extending methods from previous works, we can explicitly solve the problem in the case of strictly convex or linear pollution costs.
\end{abstract}

\noindent {\bf Key words}:
Optimal control problems;
Value function;
Graphs and networks;
Pollution control;
Transboundary pollution.

%\tableofcontents

\section{Introduction}
\textcolor{black}{Pollution control is a significant issue that arises in various research areas. In the field of economic and operational research, the problem has been studied from both the regulator and industry perspectives. In this context, a variety of questions have been addressed, ranging from the most fundamental to the more obscure.
As a first example, pollution control can be made by the institutions using environmental regulation or direct control (e.g. to face environmental emergencies), as studied in the paper \cite{Bawa}. There, the author compares the two, determining the proper mix between them.\\
Moreover, the transboundary nature of most pollutants (both air and water pollutants), which goes more in the direction of the regulator perspective, has been extensively studied in the operational research literature, in both applied and theoretical papers. \\
\noindent 
In \cite{kerl2015new}, the authors evaluate fluctuating pollutant formation from source emissions integrated within an electricity production model. In \cite{xu2009local, zhao2013harmonizing}, the authors address the issue of reducing local pollution (in the air and water, respectively) in China. In \cite{stam1992transboundary} the authors propose a multicriteria software package to evaluate various scenarios and trade-offs in Europe.\\
\noindent 
Examples of more theoretical works include \cite{bertinelli2014carbon, DeFrutos1, DeFrutos2, BFFGJME,
ferrari2019strategic, leibowicz2020urban,
el2020transboundary,
de2020non,DeFrutos3,BFFGEJOR,de2022investment,BFFGGEB}.\\
Among the above more    theoretical papers we first recall that %consider \cite{bertinelli2014carbon, ferrari2019strategic, el2020transboundary}. 
in \cite{bertinelli2014carbon, el2020transboundary,de2020non,de2022investment} models in a two-country setting are studied, while in \cite{ferrari2019strategic} a strategic model of pollution control for a firm, representative of the productive sector of a country, is considered. While these three papers study some relevant policy implications, no spatio-temporal dynamics of pollution are investigated there.\\
Geographic heterogeneity in the context of pollution mitigation (which is a key theme in the literature of Operations Research) is considered, for example, in the work of \cite{leibowicz2020urban}, which presents a theoretical spatial framework for sustainable urban land use and transportation planning that explicitly accounts for greenhouse gas-related damages. The model is a static one.\\
Going to dynamic models, to the best of our knowledge, the first works where a spatio-temporal evolution for pollution in a game setting is investigated are \cite{DeFrutos1,DeFrutos2,DeFrutos3}. More specifically, in the above three papers, the authors study a finite player model formulated in continuous time and space, in which the spatial and temporal evolution of pollution is modelled using a parabolic partial differential equation. The players behave strategically and maximise their welfare net of environmental damage caused by the pollutant stock. 
In \cite{DeFrutos1,DeFrutos3}, players choose the level of emission; in \cite{DeFrutos2}, they choose the level of emission and investment in clean technology.
%In \cite{DeFrutos3}, the optimal intraregional distribution of pollutant emissions is studied.
Concerning the methodology, in
\cite{DeFrutos1,DeFrutos2} the authors approximate the problem by discretizing space and performing some numerical analysis, while in \cite{DeFrutos3} they find an explicit solution to the Hamilton-Jacobi-Bellman system and deduce from it a closed loop equilibrium.\\
%the above three papers the impact of spatial heterogeneity is limited to ADD EXPLANATION %strategies has not been explored in the three aforementioned papers.}
%possibly for technical reasons.
%It is possible that the authors' apparent lack of interest in the question, as well as the incompatibility between the technical approach presented and the purpose mentioned above, may have contributed to the situation. 
On the other hand the papers \cite{BFFGJME, BFFGEJOR,BFFGGEB}
employ different tools from infinite dimensional control/game theory and they provide a more general theoretical treatment of related spatiotemporal pollution problems.
%The first of these, \cite{BFFGJME}, does not consider pollution control policies. 
In particular, in \cite{BFFGEJOR}, the authors study a central planner problem in which the negative externality of pollution is modelled as a continuous variable in time and space. 
In this context, the regulator chooses investments and pollution abatement controls, taking into account that production generates pollution and that pollution is transboundary. Notably, the authors derive closed-form solutions to the infinite-dimensional optimal control problem under consideration. This is achieved through a mathematical reformulation of the problem. The approach is not based on dynamic programming or the maximum principle, but on the functional reformulation of the objective function. Some of the ideas of such papers is used here to obtain closed-form solutions.\\
Our work departs from the above contributions introducing two significant innovations:
\emph{modelling the space as a network structure and introducing a less polluting (though more expensive) production method}. \\
Concerning the first novelty, the decision to model the space as a network of interconnected locations rather than as a continuous space is motivated by the fact that the network model is more flexible than the corresponding continuous-time model. This allows us to consider various network structures and provides a better basis for estimation, given that the datasets are clearly discrete. Currently, studies on networks are popular and have been widely applied to the study of economic effects
of heterogeneous interactions between different entities. See, for example, \cite{ballester2006s, elliott2013network, CGLPXY}. See also \cite{nerantzis} for applications in water distribution systems.
Regarding the application of networks to pollution control problems, we observe that \cite{DeFrutos1, DeFrutos2} also treat the discrete space case. However, in these papers, the discrete space is the result of discretising the PDE, rather than being a network with pollution flowing over arcs. In this respect, we must also mention the recent paper \cite{XUEWANGJCP24}, which presents a multi-region differential game is used to investigate the interaction and co-evolution of multiple pollution stocks, with pollution moving across locations through a network. Our work differs from the aforementioned work in terms of both the mathematical approach and the modelling motivations. Indeed, inspired by \cite{BFFGEJOR}, we adopt a different solution method that is not based on HJB equations. This enables us to study three types of geographic discrepancy: discrepancy in productivity, discrepancy in the self-cleaning capacity of nature, and discrepancy in network structure.\\
The second novelty is that the agent also has the possibility, at some cost, to shift part of the traditional production to less polluting processes.
Including two investment options reflects the distinct environmental impacts of pollution-intensive production and low-emission alternatives: coal power, synthetic fertilisers and gasoline vehicles contribute heavily to emissions, whereas renewables, regenerative farming and electric vehicles reduce pollution and environmental damage. However, despite the obvious environmental benefits, these low-pollution processes incur higher costs and generally have lower productivity due to challenges such as energy fluctuations and storage issues. Throughout this work, we will use the terms 'non-renewable' or 'brown' to refer to traditional investments, and 'renewable' or 'green' to refer to the new, lower-emission alternative. When selecting an investment strategy, the agent must balance environmental benefits with economic trade-offs.
  This paper's model allows us to examine such trade-offs and their influence on policies, which seems to be a novelty in such a dynamic context.
 More precisely, we can solve the model in cases where both 'green' and 'brown' technologies coexist, and where only 'brown' technology is available. This provides a kind of negative benchmark, and enables us to quantify the benefits of introducing green production technologies (see Remark \ref{rem:comparison_pollution}).\\
From the mathematical point of view, the novelties in our work require some nontrivial changes in the techniques employed.
For the sake of clarity we then compare our approach
with the ones of the closer papers \cite{BFFGJME,BFFGEJOR,BFFGGEB}:\\
\begin{itemize}
  \item Since the space is now modelled by a network, the state equation becomes a system of ODEs. On one side, this is, in principle, easier than the case of continuous space, as the dimension of the state variable here is finite. On the other hand, this allows us to deal with more general diffusion operators $\mathcal{L}$, which need to be treated differently (see on this the recent paper \cite{CGLPXY}, where the network structure is used in a different context).
  \item The explicit solutions are found by exploiting the linearity of the state equation and of the pollution costs to reduce the problem to a parametric static problem (see Theorem \ref{cor:J_rewritten}). However, the introduction of the less polluting production process $R$ makes it more difficult, with respect to previous papers (see \cite{BFFGEJOR,BFFGGEB}), to find explicit solutions. In particular (see Section \ref{subsec:explicit_solution_ocp}) we can treat the case of generic strictly convex cost $f$, characterizing the solutions also in this case, which is new with respect to the literature.
\end{itemize}
The main results of the paper are: Theorem \ref{cor:J_rewritten} on the reduction to a parametric static problem; Theorem \ref{teo:explicit_1} on the solution of the case when a uniformly convex cost function on renewable is considered; Theorem \ref{teo:explicit_2} on the solution of the case when a linear cost function on renewable is considered; Theorem \ref{teo:P_infty} on the asymptotic behaviour of pollution in the particular case of time-independent coefficients.\\
The content of the paper is as follows: Section \ref{sec:2} presents the general model and examines its well-posedness. In Section \ref{sec:ocp_2_energies}, the optimal control is studied and explicit solutions are found for different cost functions. Moreover, in the same section, a comparison with the case where only one type of investment, the brown one, is admitted is presented. Finally, in Section \ref{sec:simulations}, we consider a quadratic cost for renewable technologies and focus on the analysis of some numerical simulations.
}

\section{A model for Pollution on Networks}
\label{sec:2}
In this section, we provide an overview of the model. Our focus is on a central planning challenge within a spatially organised economy. Within this economy, a single commodity serves multiple roles: it is consumed, utilised as input in both renewable and non-renewable production (invested), and produced at various locations. Additionally, it is important to note that this commodity is not subject to trade between different locations; however, pollution does cross geographical boundaries.\\
In our model, the space variable is described as a network of interconnected geographic locations.  When we refer to a network, we are describing a graph with weights, where the nodes correspond to these locations (e.g., cities, regions, etc.), the edges represent the connections between select locations, and the weights signify the importance of each connection which can vary with time. This network structure enhances the realism of pollution transportation within the model and aligns with the inherent network-like nature of pollutant data, which often exhibits a similar network pattern.\\ 
We model the network of $n\ge2$ geographically distributed locations by a graph $\cG = (\cV, \cE)$ where $\cV$ is a set of vertices that corresponds to locations and $\cE$ is a set of edges connecting vertices. The graph is simple, weighted, and finite. We identify $\cV$ with the set $\{ 1,\ldots,n\}$ of sites, where pollution is accumulated, capital input is invested and output is produced, consumed and locally re-invested and $\cE$ as a subset of $\{ (i,j) \in \{ 1,\ldots,n\}^2 \text{ s.t } i \ne j \}$. We say that two vertices $i, j \in \cV$ are connected, and we write $i 	\sim j$ if there exists an edge connecting them, i.e.,
$    i \sim j  \iff (i,j) \in \cE.
$

We denote by $W=(w_{ij}(t))_{i,j\in \cV}$ the matrix of the graph, with $w_{i,j}(t)$ representing the intensity of the geographical connection from node $j$ to node $i$.
The transboundary nature of pollution is represented by the action of a linear operator $L(\cdot) \colon \bR_+ \to \bR^{n\times n}$ on the nodes of the graph. This captures the idea that pollution may enter or exit any location as a result of interaction between them, and it is defined as  $L(t)=(\ell_{i,j}(t))_{i,j \in \cV}$ with
\[\ell_{i,j}(t)=\begin{cases}
    w_{i,j}(t),\hspace{1.6cm} i\neq j\\
    -\sum\nolimits_{k=1, k\neq i}^n w_{k,i}(t), \quad i=j.
\end{cases}\]
At time $t$ and at any location $i \in \cV$, there is a single individual consuming $C_i(t)$, investing $I_i(t)$ in non-renewable production and $R
_i(t)$ in renewable production. The production coming from the investments is denoted by $Y_i(t)$ and is given by 
\begin{equation}\label{1}
Y_i(t)=a^I_i(t)I_i(t)+a^R_i(t)R_i(t).
\end{equation}
\begin{remark}
     $a^I_i(t)$ and $a^R_i(t)$ represent the productivity or technological levels for non-renewable and renewable investments, at location $i$ in time $t$. They can represent possible technological spillovers across sites, disparities in technological advancement across space % (illustrating obstacles to technological diffusion)
     , and similar dynamics. %In general, non-renewable sources tend to have higher productivity compared to renewable alternatives.
\end{remark}
%We assume $a^I_i(t) >1$,
Inspired by the work \cite{BFFGEJOR}, we assume, for simplification, that capital inputs do not accumulate over time, nor are they exchanged across space. At any location, the output is produced, consumed, and locally invested (no trade across locations), implying the following resource constraints:
\begin{equation}
    C_i(t) + I_i(t) + R_i(t) = Y_i(t).
    \label{2}
\end{equation}
Which, together with \eqref{1}, yields to 
\begin{equation*}
    C_i(t) = (a^I_i(t)- 1) I_i(t) + (a^R_i(t)- 1) R_i(t) 
\end{equation*}
We consider the following control problem with an infinite time horizon on $\cV$. 
 At each time $t \in \bR_+$ and location $i \in \cV$, the planner chooses the control variables: i.e. the investment in traditional or brown production $I_i(t)$ and the investment in green production $R_i(t)$. They contribute to the dynamics of the pollution stock $P_i(t)$.  Specifically, on each node $i \in \cV$, the pollution's dynamics evolve according to the following ODE:
\begin{equation}
     \begin{cases}
         \frac{d}{dt}P_i(t)= \sum\nolimits_{j=1}^n w_{ij}(t)P_j(t)-\sum\nolimits_{j=1}^n w_{ji}(t)P_i(t)-\delta_i(t) P_i(t)+ I_i(t) +\varepsilon_i(t) R_i(t) \\ 
         P_i(0)=p_i \in \bR_+.
     \end{cases}
     \label{eq:P_node}
\end{equation}
Here, the function $\delta_i(t)$ accounts for the effect of nature's self-cleaning mechanisms at node $i \in \cV$, which helps mitigate pollution over time. The function $\varepsilon_i(t)$ represents the pollution intensity associated with the renewable investment at location $i$. Considering values of $\varepsilon_i(t)$ strictly smaller than $1$ is reasonable, as green investments are meant to have a smaller impact on pollution growth compared to brown ones. 
Finally, $p_i \in \bR_+$, represents the initial pollution value for each $i \in \cV$.\\
Consider a social planner, who aims at controlling investment  levels $(I_i,R_i )_{i=1,\dots,n}$  to maximize the following social welfare function %(the set of admissible controls $(I_i,R_i )_{i=1,\dots,n}$ which will be specified later)
 \begin{equation}
\label{funct}     J((p_i,(I_i,R_i))_{i=1,\dots,n}) \coloneqq \int_0^{+\infty} e^{-\rho t} \left( \sum\nolimits_{i=1}^n \left ( \frac{C_i(t)^{1-\gamma}}{1-\gamma}-\omega_i P_i(t)-f_i(R_i(t)) \right) \right ) dt.
 \end{equation}
 Here, $\omega_i$ measures local environmental awareness for each location, $\rho$ is a given discount factor, $\gamma $ denotes the intertemporal substitution in consumption. The function $f_i \colon \bR_+ \to \bR_+ \; \forall i \in \cV$ represents the maintenance and operational costs related to renewable investments. 
   This social welfare function constitutes the social benefit of a community, resulting in a trade-off between different interests, namely technological production and sensitivity to environmental problems. Specifically, the investments $I_i(t)$ and $R_i(t)$  increase utility through consumption but also lead to higher pollution, which incurs cost and decreases utility. 
 \begin{remark}
 Although we recognize that nonrenewable investments also incur costs, in the context of transitioning toward a greener environment, we have chosen to focus exclusively on the running costs coming from renewable energy,  highlighting the critical need to shift towards sustainable solutions to address environmental issues.
 \end{remark}
 \begin{table}[h!]\label{tab:}
 \begin{tabular}{|l|l|}
\hline

Parameter                    & Interpretation                                                                            \\ \hline
$w_{i,j}(t)$                 & intensity of the geographical connection from node $j$ to node $i$                        \\ \hline
$a_i^I(t)$ & technological level for brown investment at the node $i$ and time $t$             \\ \hline
$a_i^R(t)$                   & technological level for green investment at the node $i$ and time $t$                 \\ \hline
$\delta_i(t)$                & decay factor at the node $i$ and time $t$                                                 \\ \hline
$\varepsilon_i(t)$              & pollution intensity associated with the green investment at the node $i$ and time $t$ \\ \hline
$\omega_i$                   & local environmental awareness at the node $i$  \\ \hline
$\rho$ & given discount factor \\ \hline
$\gamma$                     & intertemporal substitution in consumption                                                    \\ \hline

\end{tabular}
\caption{List of parameters involved in the model and their interpretation}
\end{table}
\noindent We use vector notation to describe the pollution stock, $P(t)$ and the consumption level $C(t)$\[P(t) \coloneqq (P_1(t), \ldots, P_n(t)),\quad C(t) \coloneqq (C_1(t), \ldots, C_n(t)),\] and the investments \[I(t)=(I_1(t),\ldots,I_n(t)),\quad  R(t)=(R_1(t),\ldots,R_n(t)).\] 
By defining the operator $
    \cL(t) \coloneqq (L(t)-\delta(t)),$
    and the net emissions function $N(\cdot) \colon \bR_+ \to \bR^n_+$
\begin{equation*}
    N(t) \coloneqq I(t) +\varepsilon(t) R(t), 
\end{equation*} equation \eqref{eq:P_node} can be rewritten as
\begin{equation}\label{eq:P_2}
     \begin{cases}
          \frac{d}{dt}P(t)=\cL(t)P(t)+ N(t) \\
          P(0)=p \in \bR^n_+,
     \end{cases}     
\end{equation} 
where $\delta(t)=\text{diag}(\delta_1(t),\ldots,\delta_n(t))$, 
$\varepsilon(t)=\text{diag}(\varepsilon_1(t),\ldots,\varepsilon_n(t))$.
 By using eq \eqref{1} and \eqref{2}, we can explicitly rewrite the functional in terms of the investments 
  \begin{equation}
     J(p, (I,R)) \coloneqq \int_0^{+\infty} e^{-\rho t} \left ( \left<  \frac{((A^I(t)- \mathbbm{1}) I(t) + (A^R(t)- \mathbbm{1}) R(t) )^{1-\gamma}}{1-\gamma}, \mathbf{1} \right > - \left < \omega, P(t) \right> -\langle f(R(t)),{\bf 1}\rangle\right )dt,
     \label{19}
     \end{equation}
  with $\mathbf{1}$ representing the vector of ones in $\bR^n$, $A^I(t)=\text{diag}(a^I_1(t),\ldots,a^I_n(t))$, $A^R(t)=\text{diag}(a^R_1(t),\ldots,a^R_n(t))$, $\omega=(\omega_1,\dots, \omega_n)$ and $f(R(t))=(f_1(R_1(t)), \ldots,f_n(R_n(t)))$.
The set of admissible controls $\cA(p)$ is defined as 
 \begin{equation*}
 \begin{aligned}
         &\cA(p) \coloneqq \Bigg \{ (I,R) \in PC(\bR_+; \bR^n_+ \times \bR^n_+ ) \colon \int_0^\infty e^{-\rho t}\|N(t)\|dt < \infty, \int_0^\infty e^{-\rho t}\|f(R(t))\|dt < \infty \Bigg \}.
          \end{aligned}
 \end{equation*}
 Here $PC(\bR_+; \bR^n_+ \times \bR^n_+ )$, denotes the set of piecewise continuous functions from $\bR_+$ into  $\bR^n_+ \times \bR^n_+$. \\ We call {\bf (P)} the following state constraint optimal control problem:
 \begin{equation}\label{P}\tag{{\bf P}}
     {\rm Maximize }\quad J(p, (I,R))\quad {\rm over }\quad (I,R)\in \cA(p).
     \end{equation}

% Notice that the problem is a state constraint optimal control problem. However, in the next sections, we will show that we can deal with this technical difficulty, see Theorem \ref{prop:J_rewritten} and Theorem \ref{cor:J_rewritten}.\\

\subsection{On the State Equation and the Well-Posedness of the Objective Function}\label{subsec:well_posedness}
 \begin{assumption}
 \begin{itemize}   \label{assw}
    \item[(i)] For any $i,j \in \cV$ and for any $t \in \bR_+$,  the function $t \mapsto w_{ij}(t)$ is continuous and satisfies $ w_{ij}(t) \ge 0$. There are no self-loops i.e. $w_{ii}(t) = 0 \; \forall t>0$.
    \item[(ii)] For any $i \in \cV$, the functions
    $\delta_i,a_i^I,a_i^R, \varepsilon_i, f_i, $
    are piecewise continuous. They satisfy
    \[ 0 \le  \varepsilon_i(t)<1 \quad \quad a_i^I(t), a_i^R(t)\geq 1, \]   
    and moreover \(\delta_i > 0 \) is bounded.
    %\item[(ii)]     For any $i \in \cV$ and for any $t \ge 0$, the function $t \mapsto \delta_i(t)$ is continuous and satisfies $\delta_i(t) > 0$.\label{assd}
    %\item[(iii)]    For all $i\in \cV$ and $t \ge 0$, the functions $t \mapsto a_i^I(t)$, $ t \mapsto a_i^R(t)$ are continuous and satisfy $a_i^I(t),a_i^R(t)\geq 1$. 
    \item[(iii)] $\rho > 0$, $\gamma \in (0,1)\cup(1,\infty)$, $\omega_i > 0$ $\forall i \in \cV$,
   % \item[(iv)] For $i \in \cV$ and $t \ge 0$, the function $t \mapsto \varepsilon_i(t)$ is continuous and satisfies $\varepsilon_i(t) \ge 0$ and $\varepsilon_i(t) < 1$. The function $t \mapsto \varphi_i(t)$ is continuous and satisfies $\varphi_i(t) \ge 0$. For $i \in \cV$, $\theta_i \in (0,1)$.
%\label{assphi}
%\item[(iv)]      The function $f_i \colon \bR_+ \to \bR_+ \; \forall i \in \cV$ is convex for each $i \in \cV$, \label{assf} 
     \item[(v)]    The linear operator $\cL(t)$ is continuously differentiable and there exists a positive constant $\beta$ such that the time derivative of $\cL(t)$ satisfies $\|\dot{\cL}(t)\| \le \beta$.
    \end{itemize}
\end{assumption}
 Under Assumptions \ref{assw} and by standard results for time-varying linear ODE \cite{rugh1996linear}, we get that for any admissible control strategy, the unique, continuous solution to \eqref{eq:P_2}, is given by
\begin{equation}
    \label{18}
    P(t) = \Phi(t,0)p + \int_0^t \Phi(t,s) N(s)ds, \quad t\ge0
\end{equation}
where the \textit{state transition matrix} $\Phi(t,s)$ solves 
    \begin{equation*}
         \begin{cases}
              \frac{d}{dt}\Phi(t,s)=\cL(t) \Phi(t,s), \qquad t \ge s\\ 
         \Phi(s,s)=I 
         \end{cases}
    \end{equation*}
    \begin{remark} \begin{enumerate}
     \item[(ii)] The state transition matrix $\Phi(t,s) $ is given by the Peano-Baker series and converges absolutely and uniformly for $t,s \in [-T,T]$ where $T>0$ is arbitrary (see Theorem $3.3$ in \cite{rugh1996linear}).
    %\item[(ii)] Observe that at each time $t$, the matrix $L(t)$ is a Metzler matrix, namely $\ell_{ij}(t)\geq 0$ for all $i\neq j$. And since $\delta(t)$ is diagonal, $\cL(t)$ is also a Metzler matrix. This implies that for $t \ge s$, $\Phi(t,s) \in \bR^{n\times n}_+$, \cite{kaczorek2015class}. This implies that, if we choose a zero-investment and abatements path, i.e. if $I(t)=0, R(t)=0, B(t)=0$, for all $t \ge 0 $, then the solution to \eqref{eq:P} is non-negative for every non-negative initial pollution data. \textcolor{red}{ma con questo punto del remark ci facciamo qualcosa??} 
       \item[(ii)]  
       By its definition, for each $t$,  $\zeta=0$ is an eigenvalue for $L(t)$ and the vector $(1,\dots,1)$ is an eigenvector associated with it. All other eigenvalues $\zeta(t)$ of $L(t)$ satisfy $2\min_i \ell_{ii}(t) \leq \text{Re}(\zeta(t))<0$. Given $\delta_i(t) > 0$, the eigenvalues of $\cL(t)$ lie strictly within the left-half complex plane.
       \item[(iii)] Assumption \ref{assw}-$(vi)$ by Theorem $8.7$ in \cite{rugh1996linear}, guarantees exponential stability for the state transition matrix $\Phi(t,s)$ in the time-varying case. Specifically, $\Phi(t,s)$  satisfies $\|\Phi(t,s)\| \le 1$ for any $t \ge s$.
    \end{enumerate}
    \label{rem}
    \end{remark} 
\begin{remark}[Time invariant case] \label{remark:timeinv}For time-independent weight matrix $L(t) \equiv L$ and natural decay $\delta(t) \equiv \delta$, then $\cL(t) \equiv \cL$ and the transition matrix is 
$ \Phi(t,s) = e^{\cL(t - s)}. 
$
In this case, equation \eqref{18} can be rewritten as 
\begin{equation*}
    P(t) = e^{\cL t}p + \int_0^t e^{\cL(t - s)} N(s)ds, \quad t\ge0 
\end{equation*}
where the matrix exponential is defined by the power series 
$
e^{\cL t} = \sum\nolimits_{k=0}^\infty \frac{1}{k!} \cL^k t^k.
$
The operator
\begin{equation*}
    \zeta\mathbbm{1} - \cL \colon \bR^n \to \bR^n
\end{equation*}
is invertible with bounded inverse $(\zeta\mathbbm{1}- \cL)^{-1} \colon \bR^n \to \bR^n$ and the resolvent formula (see Theorem 1.10 in Chapter II of \cite{engel2000one}) holds for every $\zeta>0$: 
\begin{equation}
    (\zeta\mathbbm{1}-\cL)^{-1}h= \int_0^\infty e^{-(\zeta\mathbbm{1}-\cL)t}hdt, \quad \forall h \in \bR^n.
    \label{16}
\end{equation}
Finally, notice that the condition on the eigenvalues given by Remark \ref{rem}$(ii)$, guarantees exponential stability for the state transition matrix $\Phi(t,s)$ in the time-invariant case. Specifically, $\Phi(t,s)$ satisfies $\|\Phi(t,s)\| \le 1$ for any $t \ge s$.
\end{remark} 
%\noindent Notice that for the time-varying system \eqref{eq:P_2}, the condition of placing the eigenvalues of  $\cL(t)$ in the left half plane for every $t >0$ alone is not sufficient to guarantee stability. In this regard, we add assumption \ref{assw}-$(vi)$ that will hold throughout the paper and provide a stability criterion for slowly time-varying linear systems.

\begin{proposition}
    $J(p,(I,R))$ is well defined for all $p \in \bR^n_+$ and $(I,R)\in \cA(p)$, possibly equal to $+\infty$ or $-\infty$ (depending, respectively, on the occurrences $\gamma \in (0,1)$ and $\gamma \in (1,\infty)$).
    \label{welldef}
\end{proposition}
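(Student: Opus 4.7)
The plan is to split the analysis according to the sign of the CRRA exponent $1-\gamma$, since the three summands in the integrand of \eqref{19} have easily identifiable signs on $\cA(p)$. Observe first that on admissible controls the term $-\langle \omega, P(t)\rangle$ is non-positive (because $\cA(p)$ forces $P(t)\ge 0$ and $\omega\in\bR^n_+$), and $-\langle f(R(t)),\mathbf{1}\rangle\le 0$ (since each $f_i\ge 0$, being convex with $f_i(0)=0$). So only the utility term can change sign, and it does so in a way governed by $\gamma$.

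For $\gamma\in(1,\infty)$ the utility term $\frac{C_i(t)^{1-\gamma}}{1-\gamma}$ is itself non-positive (equal to $-\infty$ on $\{C_i(t)=0\}$ with the usual convention). Hence the integrand of \eqref{19} is pointwise $\le 0$, and by monotone convergence $J(p,(I,R,B))$ is a well-defined element of $[-\infty, 0]$.

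For $\gamma\in(0,1)$ the utility term is instead non-negative, so the only obstruction to well-definedness is to rule out an $\infty-\infty$ situation; it suffices to bound the negative part, namely to show
$$\int_0^\infty e^{-\rho t}\langle\omega,P(t)\rangle\,dt + \int_0^\infty e^{-\rho t}\langle f(R(t)),\mathbf{1}\rangle\,dt<\infty.$$
The second integral is finite directly from the admissibility condition $\int_0^\infty e^{-\rho t}\|f(R(t))\|\,dt<\infty$ and the equivalence of norms on $\bR^n$. For the first, I would invoke the mild representation \eqref{18} together with Remark 2.1(ii), which tells us that $L$ generates a contraction semigroup and hence so does $\cL=L-\delta$ (since $\delta$ is a non-negative diagonal perturbation). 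This yields $\|e^{t\cL}\|\le 1$, so $\|P(t)\|\le\|p\|+\int_0^t\|N(s)\|\,ds$; applying Fubini-Tonelli and $|\langle\omega,P(t)\rangle|\le\|\omega\|\,\|P(t)\|$ gives
$$\int_0^\infty e^{-\rho t}\langle \omega, P(t)\rangle\,dt\le \frac{\|\omega\|\,\|p\|}{\rho}+\frac{\|\omega\|}{\rho}\int_0^\infty e^{-\rho s}\|N(s)\|\,ds,$$
finite by the first admissibility condition.

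I do not expect a genuine obstacle: the whole argument is a sign-splitting plus a one-line semigroup estimate and a Fubini swap. The only mildly delicate point is that the Fubini step tacitly requires $\rho>0$ (standard in this infinite-horizon CRRA setting); in the degenerate case $\rho=0$ one would need to strengthen the admissibility class, but under the convention $\rho>0$ the argument sketched above produces the conclusion immediately.
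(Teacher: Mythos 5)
Your argument is correct and follows essentially the same route as the paper: isolate the single-signed CRRA term, use the admissibility condition for the $f$-term, and control $\int_0^\infty e^{-\rho t}\langle\omega,P(t)\rangle\,dt$ via the mild formula, the contractivity of $e^{t\cL}$, and a Fubini--Tonelli exchange producing the same $\rho^{-1}\int_0^\infty e^{-\rho s}\|N(s)\|\,ds$ bound. One small nit: convexity together with $f_i(0)=0$ does not by itself give $f_i\ge 0$ (consider $f_i(x)=-x$); the nonnegativity you need is simply part of the standing assumption $f_i\colon\bR^+\to\bR^+$.
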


\begin{proof}
    The proof is presented in Appendix \ref{appendix:prop_welldef}
\end{proof}

\section{Solution of the Optimal Control Problem}
\label{sec:ocp_2_energies}
\subsection{Rewriting the Objective Function}\label{subsec:J_rewriting}

The planner aims at solving the optimization problem 
\begin{equation*}
    v(p)\coloneqq \sup_{(I,R)\in\cA(p)}J(p,(I,R)).
\end{equation*}
The function $v$ denotes the value function of the optimization problem, and a triple $(I^*,R^*)$ is said to be an optimal control for the problem starting at $p$ if it satisfies $J(p;(I^*,R^*))=v(p)$.\\

We now define a vector $\alpha$ (which can also be seen as a function of the nodes), which we use to rewrite the objective functional conveniently. Set
\begin{equation}\label{eq:alpha}
    \alpha(s) \coloneqq  \int_s^\infty e^{-\rho (t-s)}\Phi^\intercal(t,s)\omega dt.
\end{equation}
\begin{remark}
Notice that $\Phi^\intercal(t,s)$, the transpose of the state transition matrix, does not always coincide with the state transition matrix associated with $\cL^\intercal$. But this holds if $\cL(t)$ commutes with itself at different times.
\end{remark} 
\textcolor{black}{\begin{proposition}
\label{prop:alpha_bound}
For all \( j \in \cV\) and \(s \ge 0\), \(\alpha_j(s)\) is bounded by  %$$ \frac{  \min_{i \in \cV}  \omega_i}{\rho + \delta_{\max}} \le \alpha_j(s) \le  \frac{  \max_{i \in \cV} \omega_i}{\rho + \delta_{\min}}. $$
 $$ \frac{  \min_{i \in \cV}  \omega_i}{\rho +  \max_{j \in \cV} \sup_{ s \ge 0} \delta_j(s)} \le \alpha_j(s) \le  \frac{  \max_{i \in \cV} \omega_i}{\rho + \min_{j \in \cV} \inf_{ s \ge 0} \delta_j(s)}. $$
\end{proposition}
\begin{proof}
     The proof is presented in Appendix \ref{appendix:alpha_bound}
\end{proof}}

\textcolor{black}{\begin{remark}
Proposition \ref{prop:alpha_bound}, together with Assumptions \ref{subsec:well_posedness}(ii-iii), guarantees that for each $j \in \cV$, $\alpha_j$ remains bounded above and bounded away from zero. 
\end{remark}}
\begin{proposition}\label{prop:J_rewritten}
    We have, for all $p \in \bR^n_+$ and $(I,R)\in \cA(p)$, 
   \begin{equation}
         \begin{aligned}
             J(p, (I,R)) =& -\langle\alpha(0),p\rangle +\int_0^{+\infty} e^{-\rho t} \bigg [ \left<  \frac{((A^I(t)-1)I(t)+(A^R(t)-1)R(t))^{1-\gamma}}{1-\gamma}, \mathbf{1} \right > \\ &- \langle \alpha(t), I(t) +\varepsilon(t) R(t) \rangle -\langle f(R(t)),{\bf 1}\rangle\bigg] dt.
         \end{aligned}
         \label{J}
    \end{equation}
\end{proposition}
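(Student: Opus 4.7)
The plan is to use the defining equation $(\rho\mathbbm{1}-\cL^\intercal)\alpha=\omega$ to convert the term $\langle \omega, P(t)\rangle$ into a combination of a derivative term and a net-emissions term, and then integrate by parts against $e^{-\rho t}$ to shift the dependence on $P$ entirely onto the initial datum $p$ and the controls via $N(t)$.

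First, using the definition of $\alpha$, for every $t\ge 0$ I would write
\[
\langle \omega, P(t)\rangle = \langle (\rho\mathbbm{1}-\cL^\intercal)\alpha, P(t)\rangle = \rho\langle \alpha, P(t)\rangle - \langle \alpha, \cL P(t)\rangle.
\]
Then I would use the state equation \eqref{eq:P_2} to replace $\cL P(t) = \frac{d}{dt}P(t) - N(t)$, giving
\[
\langle \omega, P(t)\rangle = \rho\langle \alpha, P(t)\rangle - \langle \alpha, P'(t)\rangle + \langle \alpha, N(t)\rangle.
\]
The first two terms on the right are exactly $-\frac{d}{dt}\bigl(e^{-\rho t}\langle \alpha, P(t)\rangle\bigr)$ after multiplying by $e^{-\rho t}$, so integrating over $[0,T]$ and passing to the limit yields
\[
\int_0^{\infty} e^{-\rho t}\langle \omega, P(t)\rangle\, dt \;=\; \langle \alpha, p\rangle \;-\; \lim_{T\to\infty} e^{-\rho T}\langle \alpha, P(T)\rangle \;+\; \int_0^{\infty} e^{-\rho t}\langle \alpha, N(t)\rangle\, dt.
\]
Substituting this identity into \eqref{19} and recalling the definition of $N(t)$ gives the stated formula \eqref{J}.

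The two analytic points that need verification are the vanishing of the boundary term at infinity and the integrability of $e^{-\rho t}\langle \alpha, N(t)\rangle$. The latter follows directly from the admissibility condition $\int_0^\infty e^{-\rho t}\|N(t)\|\,dt<\infty$ together with the Cauchy--Schwarz inequality, since $\alpha$ is a fixed vector in $\bR^n$. For the former, I would use the mild formulation \eqref{18}: the contraction property of $(e^{t\cL})_{t\ge 0}$ (from Remark following the assumptions) gives $\|e^{T\cL}p\|\le \|p\|$, hence $e^{-\rho T}\langle \alpha, e^{T\cL}p\rangle \to 0$; for the forcing part, Fubini and contraction yield
\[
e^{-\rho T}\Bigl\|\int_0^T e^{(T-s)\cL}N(s)\,ds\Bigr\| \;\le\; \int_0^T e^{-\rho(T-s)}\,e^{-\rho s}\|N(s)\|\,ds,
\]
which tends to $0$ as $T\to\infty$ by dominated convergence, since $e^{-\rho s}\|N(s)\|\in L^1$.

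The only mildly delicate step is the limit $e^{-\rho T}\langle \alpha, P(T)\rangle\to 0$, because it requires combining the contraction estimate with the admissibility integrability of $N$; everything else is a direct computation using the definition of $\alpha$ and an integration by parts already visible in the proof of Proposition \ref{welldef}.
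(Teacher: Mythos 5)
Your proof is correct, but it follows a genuinely different route from the paper's. The paper substitutes the variation-of-constants representation \eqref{18} of $P(t)$ directly into $\int_0^\infty e^{-\rho t}\langle \omega,P(t)\rangle\,dt$, exchanges the order of integration by Fubini--Tonelli, and identifies the inner integral with the resolvent via \eqref{16}, finally transposing $(\rho\mathbbm{1}-\cL)^{-1}$ onto $\omega$ to produce $\alpha$; no boundary term at $t=\infty$ ever appears. You instead work with the adjoint equation $(\rho\mathbbm{1}-\cL^\intercal)\alpha=\omega$ and the differential form of the state equation, and integrate by parts against $e^{-\rho t}$. This is the classical ``costate/transversality'' argument: it avoids Fubini and the resolvent formula, but in exchange you must verify that the boundary term $e^{-\rho T}\langle\alpha,P(T)\rangle$ vanishes as $T\to\infty$, which you do correctly by splitting $P(T)$ via the mild formulation, using the contraction property of $(e^{t\cL})_{t\ge0}$ for the homogeneous part and dominated convergence together with $e^{-\rho\cdot}\|N(\cdot)\|\in L^1$ for the forced part. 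One small point worth making explicit in your write-up: the integration by parts is legitimate because $\cL$ is a bounded operator on $\bR^n$ and $N$ is locally integrable, so the mild solution is absolutely continuous and satisfies the ODE a.e.; and the rearrangement of the integral of $\langle\omega,P(t)\rangle$ into the three pieces is justified by the finiteness established in Proposition \ref{welldef} together with the admissibility bound on $N$. With those remarks your argument is complete and fully equivalent in strength to the paper's.
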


\begin{proof}
     The proof is presented in Appendix \ref{appendix:J_rewritten}
\end{proof}

As a consequence of 
\eqref{J}, we get the following useful result.
\begin{theorem}
\label{cor:J_rewritten}
Let $(I^*(t),R^*(t))$ be an admissible strategy, i.e. $(I^*(t),R^*(t))\in \cA(p)$.
Assume moreover that, for a.e. $t \in \bR_+ $, and for each $i \in \cV$., the triplet $(I^*_i(t),R^*_i(t))$ is a maximum point for the function 
$
F_{it}: \mathbb{R}^2_+\to \mathbb{R}
$
where
\begin{equation}\label{eq:F}
  F_{it}(I_i,R_i)=  \frac{((a^I_i(t)-1)I_i+(a^R_i(t)-1)R_i)^{1-\gamma}}{1-\gamma}
    - \alpha_i(t)( I_i +\varepsilon_i(t) R_i)-f_i(R_i),
      \end{equation}
 then $(I^*(t),R^*(t))$ is optimal for the problem {\bf (P)}.
\end{theorem}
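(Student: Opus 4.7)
The plan is to exploit Proposition \ref{prop:J_rewritten} in order to decouple $J$ into a sum over nodes and then reduce optimality over $\cA(p)$ to a family of independent pointwise maximizations indexed by $(i,t)$. Concretely, I would first observe that since the matrices $A^I(t)-\mathbbm{1}$, $A^R(t)-\mathbbm{1}$, $\varepsilon$ and $\varphi(t)$ are all diagonal, each of the three inner products appearing inside the integral in \eqref{J} splits into a sum of $n$ independent scalar contributions, and the function $f$ acts elementwise. Therefore identity \eqref{J} can be rewritten as
\begin{equation*}
J(p,(I,R,B)) = -\langle\alpha,p\rangle+\int_0^{+\infty} e^{-\rho t}\sum_{i=1}^n F_{it}\bigl(I_i(t),R_i(t),B_i(t)\bigr)\,dt,
\end{equation*}
with $F_{it}$ as in \eqref{eq:F}.

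Next I would verify the elementary but crucial compatibility between the admissibility set $\cA(p)$ and the pointwise domains $D_i(t)$: for any $(I,R,B)\in\cA(p)$ the constraint $(a^I_i(t)-1)I_i(t)+(a^R_i(t)-1)R_i(t)-B_i(t)\ge 0$ holds for every $t$ and $i$, so $(I_i(t),R_i(t),B_i(t))\in D_i(t)$ almost everywhere. Hence the hypothesis that $(I^*_i(t),R^*_i(t),B^*_i(t))$ is a maximizer of $F_{it}$ on $D_i(t)$ yields the pointwise inequality
\begin{equation*}
F_{it}\bigl(I^*_i(t),R^*_i(t),B^*_i(t)\bigr)\ \ge\ F_{it}\bigl(I_i(t),R_i(t),B_i(t)\bigr)
\end{equation*}
for a.e.\ $t$ and every $i$. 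Multiplying by $e^{-\rho t}\ge 0$, summing over $i$ and integrating over $\mathbb{R}^+$ preserves the inequality, and subtracting the common term $-\langle\alpha,p\rangle$ gives $J(p,(I^*,R^*,B^*))\ge J(p,(I,R,B))$ for every admissible $(I,R,B)$, which is the required optimality.

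The main subtlety to address is the well-posedness of the integrals involved: a priori, sums of $F_{it}$ evaluated along admissible controls could exhibit cancellations between positive and negative parts. However, Proposition \ref{welldef} (together with the admissibility conditions imposing $L^1_\rho$-integrability of $\|N(\cdot)\|$ and $\|f(R(\cdot))\|$) guarantees that $J(p,(I,R,B))$ is well-defined in $\mathbb{R}\cup\{+\infty\}$ if $\gamma\in(0,1)$ and in $\mathbb{R}\cup\{-\infty\}$ if $\gamma>1$; in both regimes the pointwise inequality above integrates unambiguously because the only non-sign-definite terms (those involving $\alpha$ and $f$) are integrable along any admissible control. Aside from this bookkeeping, no convex analysis or measurable selection is needed, since the optimizer is assumed from the outset to be admissible, hence measurable: the argument is a direct application of the rewritten functional together with pointwise maximality.
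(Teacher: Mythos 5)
Your proof is correct and follows essentially the same route as the paper: decompose the rewritten functional \eqref{J} node by node, use the pointwise maximality of $F_{it}$ over $D_i(t)$ together with the fact that every admissible control takes values in $D_i(t)$, and integrate the resulting pointwise inequality. If anything, your treatment is slightly more careful than the paper's on two points the original proof leaves implicit — the compatibility between $\cA(p)$ and the domains $D_i(t)$, and the well-posedness of the integrated comparison via Proposition \ref{welldef} — but these are refinements of the same argument, not a different one.
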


\begin{proof}
     The proof is presented in Appendix \ref{appendix:cor_J_rewritten}
\end{proof}

\begin{remark}
    From Theorem \ref{cor:J_rewritten}, it is clear that the optimal control $I_i,R_i$ are interlaced, and they depend on the other nodes only through the parameter $\alpha_i(t)$, which depends on the matrix $\cL$.
\end{remark}
\begin{remark}
    If the value function is finite, from the rewriting of the functional \eqref{funct} presented in \eqref{J} and \eqref{J_node}, we can deduce some monotonic relationships between the value function and the various parameters of the model, in particular
   \begin{itemize}
       \item the value function $v$ is increasing with respect to technological productivities, $a_i^I$, $a_i^R$ for each individual node $i$.
       \item  the value function $v$ is decreasing with respect to pollution intensity $\varepsilon_i$, with respect to pollution awareness $\omega_i$ (since $\alpha$ is a linear in $\omega$) at each location $i$. 
   \end{itemize}
   By assuming quadratic or linear costs, $f_i(R_i)=\lambda_iR_i$ or $f_i(R_i)=\lambda_iR_i^2$ for some parameter $\lambda_i > 0$, we can deduce that the value function is decreasing also with respect to the cost parameter $\lambda_i$.
\end{remark}

\subsection{Explicit Solution of the Problem and Optimal Path}
\label{subsec:explicit_solution_ocp}
To ensure the existence of an optimal solution for {\bf (P)}, we will make the additional assumption.
\textcolor{black}{
\begin{assumption}\label{assumption:renewable}
    \begin{enumerate}
    \item[(i)]  $ \forall \; t \in \bR_+,  \; \forall \; i \in \cV$, $a_i^I(t), a_i^R(t) > 1$.
        \item[(ii)] There exist $C \ge 0, g \in \bR$ such that $ \forall t \in \bR_+$, $ \forall i \in \cV,$ 
        \begin{equation*}
    (a^I_i(t)-1)^{\frac{1-\gamma}{\gamma}}+ (a^R_i(t)-1)^{\frac{1-\gamma}{\gamma}} +  \frac{(a^R_i(t)-1)}{(a^I_i(t)-1)}  \le Ce^{gt}, 
        \end{equation*}      
    \item[(iii)] $\rho > g$.
    \end{enumerate}
    \label{ass3}
\end{assumption}}
\begin{theorem}\label{teo:explicit_1}
    Let Assumption \ref{assumption:renewable} hold. \textcolor{black}{Consider a cost function $f$ satisfying $f(0)=0$,  $f',f'' \ge \epsilon$ for some $\epsilon > 0$, and $\int_0^\infty e^{- \rho t} \|f(R)\|dt < \infty$.}  Then,
    \begin{itemize}
        \item If \begin{equation}\label{cond_1}
\varepsilon_i(t) \le \left(\frac{a_{i}^R(t)-1}{a_{i}^I(t)-1}\right)- \frac{f_i'(0)}{\alpha_i(t)},
\end{equation}
\textcolor{black}{\begin{equation}\label{cond_2}
f'_i\left(\frac{1}{a_{i}^R(t)-1}\left(\frac{a_i^I(t)-1}{\alpha_i(t)}\right)^{1/\gamma}\right)\geq \alpha_i(t)\left(\frac{a_i^R(t)-1}{a_i^I(t)-1}-\varepsilon_i(t)\right),
\end{equation} }
the inner point
    \begin{equation}\label{eq:inner_points}
    \begin{cases}
R_{1,i}=\left[(f'_i)^{-1}\left(\alpha_i(t)\left(\frac{a_{i}^R(t)-1}{a_i^I(t)-1}-\varepsilon_i(t)\right)\right)\right]\\
    I_{1,i}=\left[\left(\frac{\alpha_i(t)}{a_i^I(t)-1}\right)^{-1/\gamma}-(a_{i}^R(t)-1)R_{1,i}\right]\frac{1}{(a_{i}^I(t)-1)},\\
    \end{cases}
    \end{equation}
        belongs to $\cA(p)$ and is optimal starting at each $p$.
            The optimal consumption flow is  for each $i \in \cV$
    \begin{equation*}
        C_{1,i}(t) = \left(\frac{a_i^I(t)-1}{\alpha_i(t)}\right)^{\frac{1}{\gamma}}.
    \end{equation*}
    The optimal emissions flow is for each $i \in \cV$
    \begin{equation}\label{eq:N_optimal_1} 
    N_{1,i}(t) \coloneqq I_{1,i}(t)+\varepsilon R_{1,i}(t).      \end{equation}
 \item If condition \eqref{cond_1} holds and also 
\textcolor{black}{\begin{equation}\label{cond_4}
f'_i\left(\frac{1}{a_{i}^R(t)-1}\left(\frac{a_i^I(t)-1}{\alpha_i(t)}\right)^{1/\gamma}\right)< \alpha_i(t)\left(\frac{a_i^R(t)-1}{a_i^I(t)-1}-\varepsilon_i(t)\right),\\
\end{equation}}
a boundary point with $I_{2,i}=0$ is global maximum. In this case, an explicit formula for the optimal control is not available. However, controls are the unique solution of the non-linear system of equations,
\begin{equation}\label{eq:boundary_points2}
    \begin{cases}
R_{2,i}\,\,s.t.\,\, \left((a^R_i(t)-1)R_{2,i}\right)^{-\gamma}\cdot(a^R_i(t)-1)- \varepsilon_i(t)\alpha_i(t)=f_i'(R_{2,i}) \\ 
I_{2,i}=0,
    \end{cases}
    \end{equation}
    belongs to $\cA(p)$ and is optimal starting at each $p$.
       The optimal consumption flow is, for each $i \in \cV$, 
    \begin{equation*}
        C_{2,i}(t) = \left(\frac{a_i^R(t)-1}{\varepsilon_i(t)\alpha_i(t)+f_i'(R_{2,i})}\right)^{\frac{1}{\gamma}}
    \end{equation*}
    The optimal emissions flow is 
    $
        N_{2,i}(t) \coloneqq \varepsilon_i(t) R_{2,i}(t)
    $
    for each $i \in \cV$.
    \item If \begin{equation}\label{cond_3}
\varepsilon_i(t) >\left(\frac{a_{i}^R(t)-1}{a_{i}^I(t)-1}\right)- \frac{f_i'(0)}{\alpha_i(t)}
\end{equation}
the boundary point
    \begin{equation}\label{eq:boundary_points1}
    \begin{cases}

     I_{3,i}=\left[\left(\frac{\alpha_i(t)}{a_i^I(t)-1}\right)^{-1/\gamma}\right]\frac{1}{(a_{i}^I(t)-1)}\\
     R_{3,i}=0,
    \end{cases}
    \end{equation}
        belongs to $\cA(p)$ and is optimal starting at each $p$.
           The optimal consumption flow is for each $i \in \cV$
    \begin{equation*}
        C_{3,i}(t) = \left(\frac{a_i^I(t)-1}{\alpha_i(t)}\right)^{\frac{1}{\gamma}}.
    \end{equation*}
    The optimal emissions flow is 
    $
        N_{3,i}(t) \coloneqq I_{3,i}(t).   
    $ for each $i \in \cV$.
\end{itemize}
 %    The optimal emissions flow is 
%    \begin{equation*}
 %       N^*(t) \coloneqq \varepsilon R^*(t) - \varphi(t)B^*(t)^\theta   
 %       \label{eq:N_optimal_1},
 %   \end{equation*}
%    and the optimal consumption flow is 
%    \begin{equation*}
 %       C^*(t) =\left(\frac{A^I(t)-\mathbbm{1}}{
%\alpha}\right)^{\frac{1}{\gamma}}.
%    \end{equation*}
\end{theorem}
\begin{proof}
    The proof is presented in Appendix \ref{appendix:teo:explicit_1}
\end{proof}
\textcolor{black}{
\begin{remark} \label{hp_intepretation}  The requirement $a^I, a^R > 1$ in Assumption \ref{assumption:renewable} ensures that neither form of production is a priori excluded from the optimization problem and that all subsequent assumptions and conditions are well defined for any value of $\gamma$. Moreover, Assumption \ref{assumption:renewable}  (ii-iii) guarantees that the optimal solutions lie within the set of admissible controls $\cA(p)$.  
\end{remark}}
\begin{remark}
The result stated in the theorem above admits a clear interpretation. Specifically, we observe that if the pollution rate associated with green investment is sufficiently high, it is optimal to invest exclusively in brown energy, implying \( R \equiv 0 \). \textcolor{black}{Conversely, when the pollution rate of renewable investment is sufficiently low, the optimal strategy depends on the trade-off between the marginal cost of renewable investment and the effect of accumulated pollution on the relative attractiveness of brown versus renewable investment, as captured by conditions \eqref{cond_2} and \eqref{cond_4}. Specifically, if the marginal cost of renewables grows quickly enough, i.e. \eqref{cond_2} holds, then it is optimal to supplement green investment with brown one. On the other hand, if the inequality is reversed, i.e.  \eqref{cond_4} holds, the marginal cost of green investment grows more slowly, and investing exclusively in green production alone is optimal.}
\end{remark}

\begin{theorem}\label{teo:explicit_2}
 If Assumption \ref{assumption:renewable} holds. Consider a linear cost function $f_i(r)=\lambda_i r$, $ \lambda_i > 0$.
     \begin{itemize}
\item 
If  \begin{equation}\label{cond_1_linear}
\varepsilon_i(t)<\left(\frac{a_{i}^R(t)-1}{a_{i}^I(t)-1}\right)- \frac{\lambda_i}{\alpha_i(t)},
\end{equation}
the boundary point
\begin{equation}\begin{cases}\label{eq:boundary1_linear}
    I_{1,i}=0\\
    R_{1,i} = (a^R_i(t)-1)^{\frac{1-\gamma}{\gamma}} \left(\lambda_i +\varepsilon_i(t)\alpha_i(t)\right)^{-\frac{1}{\gamma}} ,
    %R^*_i=\frac{1}{(A^R_i 1)}\left(\frac{\lambda_i+\alpha_i\varepsilon_i}{(a^R_i-1)}\right)^{-\frac1\gamma}+\frac{1}{(A^R_i-1)}\left(\frac{\lambda_i+\alpha_i\varepsilon_i}{(A^R_i-1)\alpha_i\varphi_i\theta}\right)^{\frac{1}{\theta-1}}.
\end{cases}
\end{equation}
    belongs to $\cA(p)$ and is optimal starting at each $p$. 
    The optimal consumption flow is for each $i \in \cV$ 
    \begin{equation*}
        C_{1,i}(t) = \left(\frac{a^R_i(t)-1}{\lambda_i+\varepsilon_i(t)
\alpha_i(t)}\right)^{\frac{1}{\gamma}}. 
    \end{equation*}
The optimal emissions flow is 
 $
        N_{1,i}(t) \coloneqq \varepsilon_i(t)R_{1,i}(t).   
        $ for each $i \in \cV$.
\item If 
 \begin{equation}\label{cond_2_linear}
\varepsilon_i(t)>\left(\frac{a_{i}^R(t)-1}{a_{i}^I(t)-1}\right)- \frac{\lambda_i}{\alpha_i(t)},
\end{equation}
the boundary point
\begin{equation}
\begin{cases}\label{eq:boundary2_linear}
    I_{2,i}=(a^I_i(t)-1)^{\frac{1-\gamma}{\gamma}}\alpha_i(t)^{-\frac1\gamma}\\
    R_{2,i}=0,
\end{cases}
\end{equation}
  belongs to $\cA(p)$ and is optimal starting at each $p$. 
      The optimal consumption flow is for each $i \in \cV$
    \begin{equation*}
        C_{2,i}(t) = \left(\frac{a^I_i(t)-1}{
\alpha_i(t)}\right)^{\frac{1}{\gamma}}.
    \end{equation*}
    The optimal emissions flow is 
       $ N_{2,i}(t) \coloneqq I_{2,i}(t).   
        $ for each $i \in \cV$.
\item If  \begin{equation}\label{cond_3_linear}
\varepsilon_i(t)=\left(\frac{a_{i}^R(t)-1}{a_{i}^I(t)-1}\right)- \frac{\lambda_i}{\alpha_i(t)},
\end{equation}
the inner point
\begin{equation}
\begin{cases}\label{eq:inner_linear}
    (a^I_i(t)-1)I_{3,i}+(a^R_i(t)-1)R_{3,i}=(a^I_i(t)-1)^{\frac1\gamma}\alpha_i(t)^{-\frac{1}{\gamma}}\\
       
       0<R_{3,i}<(a^R_i(t)-1)^{-1}\left( (a^I_i(t)-1)^{\frac1\gamma}\alpha_i(t)^{-\frac{1}{\gamma}}\right),
\end{cases}
\end{equation}
   belongs to $\cA(p)$ and is optimal starting at each $p$.  The optimal consumption flow is, for each $i \in \cV$,
    \begin{equation*}
        C_{3,i}(t) =  \left(\frac{a^I_i(t)-1}{
\alpha_i(t)}\right)^{\frac{1}{\gamma}}.
    \end{equation*}
    The optimal emissions flow is 
 $
        N_{3,i}(t) \coloneqq I_{3,i}(t)+\varepsilon_i(t)R_{3,i}(t).   
$ for each $i \in \cV$.
\end{itemize}
    \label{theo:4.3}
    \end{theorem}
\begin{proof}
        The proof is presented in Appendix \ref{appendix:teo:explicit_2}
    \end{proof}
\begin{remark}
Since the case of linear costs cannot be regarded as a special case of Theorem \ref{theo:4.3}, we analyze it separately. In doing so, we uncover a result that appears natural when linear costs are considered. \textcolor{black}{We essentially distinguish two relevant cases. First, if the effective pollution rate is relatively low, specifically if condition \eqref{cond_1_linear} holds, it is optimal to invest solely in green production, with no investment in browns. Conversely, if 
it is relatively high, as in condition \eqref{cond_2_linear}, it is optimal to invest exclusively in brown production, with zero green investment. The third case, corresponding to the equality in condition \eqref{cond_3_linear}, defines an inner point where both investments coexist; however, this occurs on a set of null measure and is therefore negligible in practical terms.}
\end{remark}

\subsection{Long-Time Behaviour of the Optimal State Trajectory}
\label{subsec:behavior_infty}
We now consider the special case when the coefficients are time-independent. We denote with $(I^*(t),R^*(t))$ optimal consumptions stated in the Theorem \eqref{teo:explicit_1} or in Theorem \eqref{teo:explicit_2}.
\begin{theorem}\label{teo:P_infty}
    Let Assumption \ref{ass3} hold. Assume that the coefficients $a^I_i, a^R_i, \delta_i, \varepsilon_i $ \ are time independent,  $\forall i \in \cV$. Then 
    \begin{equation*}
        \lim_{t\to \infty} \sum\nolimits_{i=1}^n |P^*_i(t) - P^*_{i,\infty}|^2 = 0, 
    \end{equation*}
    where $P^*_i(t)$ is the unique solution of the equation 
    \[\frac{d}{dt} P(t)=\mathcal{L}(t)P(t)+N^*\]
    with $N^*=I^*+\varepsilon R^*$ and $P^*_\infty$ is the unique solution to the matrix equation 
    $
        \cL P + N^* = 0.
    $
    \label{theo:4.4}
\end{theorem}

\begin{proof}
    The proof is presented in Appendix \ref{appendix:P_infty}
\end{proof}

\subsection{The Model with $a_i^R(t) \equiv 1 \; \forall i \in \cV $.}
\label{sec:ocp_1_energy}
We will now address the case in which on every node, the non-renewable productivity factor is strictly greater than one, while the renewable one equals one, namely $a_i^R(t)\equiv 1$ and $a_i^I(t)> 1$ $\forall i\in\cV$. In this setting, any investment in renewable energy sources is economically unfeasible, and thus the entire production relies on a single (traditional) energy source. So, to simplify our setting, we will directly consider the only possible investment $I$. With this simplification, the production takes the form
\begin{equation*}
    Y_i(t)=a^I_i(t)I_i(t),
\end{equation*}
and the resource constraint implies that the consumption is simply given by
\begin{equation*}
    C_i(t) = (a^I_i(t)- 1) I_i(t)
\end{equation*}
The dynamics of $P$ can be written as
 \begin{equation*}
     \begin{cases}
         \frac{d}{dt}P(t)=(L-\delta)P(t)+ I(t)\\ 
         P(0)=p \in \bR_+^n,
     \end{cases}
 \end{equation*}
 and the social welfare to be maximised 
 \begin{equation}
 \label{2.4noR}
     J(p,I) \coloneqq \int_0^{+\infty} e^{-\rho t} \left( \sum\nolimits_{i=1}^n \left ( \frac{C_i(t)^{1-\gamma}}{1-\gamma}-\omega_i P_i(t)\right) \right ) dt.
 \end{equation}
 Finally, the set of admissible controls 
 
 \begin{equation}
 \begin{aligned}
         \cA(p) \coloneqq \Bigg \{  I \in PC(\bR_+; \bR^n_+ \colon \int_0^\infty e^{-\rho t}\left( \sum\nolimits_{i=1}^n |I_i(t)|^2 \right)^{\frac{1}{2}}dt < \infty \Bigg \}.
          \end{aligned}
          \label{set_noR}
 \end{equation} 
To guarantee the existence of a solution, we impose the following assumptions, in addition to those stated in Assumption \ref{subsec:well_posedness}.
\begin{assumption}
    \begin{enumerate}
        \item[(i)] There exist $C \ge 0, \; g \in \bR$ such that 
        \begin{equation*}
    (a^I_i(t)-1)^{\frac{1-\gamma}{\gamma}} \le Ce^{gt}, \qquad \forall t \in \bR_+, \quad \forall i \in \cV,
        \end{equation*}    
    
    \item[(ii)] $\rho > g$.
    \end{enumerate}
    \label{ass2}
    \begin{remark}
      Assumptions \ref{ass2} $(i)-(ii)$ guarantee that the value function is finite.
 Given this, to solve the problem, we use the alternative form \eqref{J} of the objective functional. In such form, we take the control which, for every $t$ maximizes the integrand in \eqref{J}. This is a candidate optimal control. 
    \end{remark}
\end{assumption}
\begin{theorem}
\label{theo:3.3}
    The optimal investment $I_i$, given by 
    \begin{equation*}
       \begin{aligned}
           I_i(t)=\alpha_i(t)^{-\frac{1}{\gamma}}(a^I_i(t)-1))^{\frac{1-\gamma}{\gamma}},
       \end{aligned}
    \end{equation*}
    belongs to $\cA(p)$ in \eqref{set_noR} and is optimal for \eqref{2.4noR} starting at each $p$. The optimal consumption flow is for each $i \in \cV$
    \begin{equation*}
        C_i(t) = \left(\frac{a^I_i(t)-1}{\alpha_i(t)}\right)^{\frac{1}{\gamma}}.   
    \end{equation*}  
        The optimal emissions flow is $
        N_i(t) \coloneqq I_i(t)\label{eq:N_optimal_noR}.
   $ for each $i \in \cV$.
    %The optimal pollution flow is 
    %\textcolor{black}{
    %\begin{equation*}
    %    P^*(t) \coloneqq \Phi(t,0)p + \int_0^t \Phi(t,s)N^*(s)ds, \quad t\ge0.
    %\end{equation*}}
    %The value function is affine in $p$: 
%\begin{equation*}
 %       \begin{aligned}
 %           v(p)= J(p; (I^*,B^*)) =& -\langle\textcolor{black}{\alpha(t)},p\rangle + \int_0^\infty e^{-\rho t} \left ( \sum\nolimits_{i=1}^n \frac{\gamma}{1-\gamma} \left( \frac{a^I_i(t)-1}{\textcolor{black}{\alpha_i(t)}} \right)^{\frac{1-\gamma}{\gamma}}\right) dt \\&- \theta^{\frac{1}{1-\theta}}(1-\theta^{-1})\int_0^\infty e^{-\rho t} \left(  \sum\nolimits_{i=1}^n \textcolor{black}{\alpha_i(t)}\varphi_i^{\frac{1}{1-\textcolor{black}{\theta_i}}} \left( (a^I_i(t) - 1)\right)^{\frac{\textcolor{black}{\theta_i}}{1-\textcolor{black}{\theta_i} }}\right)dt.
 %       \end{aligned}
 %   \end{equation*}
 
\end{theorem}

\begin{proof}
    The proof is presented in Appendix \ref{appendix:3.3}
\end{proof}

%In our model, local pollution reduction efforts are determined by local productivity, taking into account both production and de-pollution activities. This aspect is independent of the transboundary nature of pollution. However, when it comes to 
Notice that in our model, investment depends on the transboundary nature of pollution. The regulator must consider not only the local technological factors but also the potential impact of making investments in a specific location on the neighbouring areas in terms of pollution. It is worth noting that local investments may not necessarily increase with local productivity, denoted as $A^I(t)$. In some cases, higher local productivity might result in lower investments ($I(t)$), which leads to reduced local emissions, albeit at the cost of a slight reduction in production. \\
\begin{remark}[Comparison between pollutions]\label{rem:comparison_pollution}
    We investigate the impact of introducing a new source of energy into the economic system. In particular, we analyze how the adoption of a cleaner energy source affects pollution levels. Our study compares the optimal emission flow in two scenarios:
\begin{enumerate}
    \item When only a single energy source is available, as described in equation \eqref{eq:N_optimal_noR}.
    \item When two energy sources are considered, and the cost of using the new energy source follows a quadratic function, as described in equation \eqref{eq:N_optimal_1}.
\end{enumerate}
Since we assume that \( f \) is strictly increasing, it follows that
\[
N_{1,i} = N_{i} - (f'_i)^{-1} \left(\alpha_i(t) \left(\frac{a_i^R(t) - 1}{a_i^I(t) - 1} - \varepsilon_i(t) \right)\right) \left(\frac{a_i^R(t) - 1}{a_i^I(t) - 1} - \varepsilon_i(t) \right) < N_{i}.
\]
This result implies that, in each location, the optimal emission flow when two energy sources are available is lower than the optimal emission flow when only a single, more polluting energy source is used.
\end{remark}

\section{Some Numerical Investigations when the Renewable Cost Function is Quadratic}
\label{sec:simulations}
In this section, we present the results of a series of quantitative exercises where the cost of renewable technology is quadratic, i.e. $f_i(R_i)=\lambda_iR_i^2,$
and all the relevant parameters of the model are kept constant.
 A straightforward consequence of the static choice of parameters is that the profile of optimal investments is also constant in time, and in particular, they coincide with their long-time distribution. The same argument holds for production, consumption and emission, but not for pollution. Indeed, in the plots presented in the section, we will just present its long-time distribution.\\
The large number of parameters in the model gives us great flexibility in terms of potential scenarios to study. We have therefore selected some interesting ones focusing on how optimal investment strategies are affected by the use (or non-use) of renewable investments and how these choices change according to heterogeneities on the nodes. In particular, our analysis will focus on the following points.
\begin{itemize}
    \item The role of Renewable Investment: We aim to analyze how optimal investment choices change by including renewable investment in the model. %We also investigate how the costs related to the renewable investment affect the optimal strategies. \textcolor{black}{vedere se lo facciamo}
\item Geographic or Economic Heterogeneity: By introducing parameter heterogeneity among nodes, we study how geographic and economic differences influence investment decisions. We explore the effects of spatial factors (e.g. modelling rural or urban settings) by varying natural decay across nodes and modifying the diffusion matrix $L$ to represent different types of pollution flow.  As for the economic heterogeneities, we consider different values of productivity factors (for both green and brown production).
\item  Combined effects of Heterogeneity: We consider scenarios that combine geographic and economic disparities. We will study how the results of the previous analysis are enhanced or mitigated when combined. This allows us to investigate, for example, how economically but not geographically advantaged network nodes, such as densely populated urban centres, respond differently in terms of optimal investments compared to less productive but geographically advantaged areas. 
\end{itemize}

\subsection{Calibration of the Parameters}
\label{calibration}
We start by focusing on the characterization and calibration of the parameters based on existing literature. Given the similarity of the two models, we will refer mainly to \cite{BFFGEJOR} to assign numerical values. \\ 
For the network structure, we consider $n=21$ nodes arranged on a circular network. We model the flow of pollution on the network through the matrix $L$. The elements of $L=(\ell_{ij})$ control how the pollution is dispersed from one node to the other, reflecting geographic factors such as proximity and connectivity between locations. Each entry $\ell_{ij}, i \ne j$ reflects how easily the pollution generated at node $j$ can spread to node $i$, and the diagonal term $\ell_{jj} = - \sum\nolimits_{i \ne j} \ell_{ij}$ keeps track of all the pollution leaving node $j$. Notice that the matrix is not required to be symmetrical, this allows us to consider asymmetric pollution flows between regions, capturing phenomena such as the influence of directional factors, such as prevailing winds or currents, which may make pollution disperse more strongly in one direction than another with no need for additional advection terms.  \\
We first consider a \textit{nearest-neighbor Laplacian} with periodic boundary conditions, where the matrix $L^1=(\ell^1_{ij})$ is given by:
\[
\ell^1_{ij}=\begin{cases}
    \frac{1}{2}, \quad j=i+1\,\,\text{or}\,\, j=i-1\\
    -1,\quad i=j \\
    \frac{1}{2},\quad i=1,\, j=n \,\,\text{or}\,\,i=n,\,\, j=1 
\end{cases}
\]
 Here, pollution spreads from a node to its immediate neighbour, meaning the diffusion is relatively local and slow.\\ % This choice mimics natural dispersion patterns in the environment and represents a good approximation of a situation in which pollution is expected to spread through local interaction (e.g. wind patterns in urban streets or waterways in canals). \\ 
 Afterwards, we also test a \textit{distance based laplacian} on all nodes where the matrix $L^2=(\ell^2
 _{ij})$ is given by 
\[
\ell^2_{ij} = \begin{cases} 
\frac{1}{\min(|i - j|, n - |i - j|)} & \text{if } i \neq j.
 \\[10pt]-\sum\nolimits_{k \neq i} \frac{1}{\min(|i - k|, n - |i - k|)} & \text{if } i = j,
\end{cases}
\]
This representation defines each element $\ell^1_{ij}$ as the reciprocal of the circular distance between nodes $i$ and $j$.  
The idea is that the pollution on each node can spread to all the others with a weight that diminishes with distance. Compared to the matrix $L^1$, we have global diffusion, even though the intensity reduces as the distance grows. \\%This suits scenarios in which pollution disperses over wide areas, such as open plains with wind or bodies of water.\\  
To incorporate asymmetries in the spread matrix, we add distortion in the diffusion of pollution, by the use of a "wind" with strength, e.g. $wind=0.4$, acting only on a specific set of nodes $\tilde{\mathcal{J}} \subseteq \{1 \ldots, n\}$. This adjustment increases the dispersal rate in the direction of the wind and lowers it. We define the  $L^3=(\ell^3_{ij})$ such that the off-diagonal elements are given by: 
\[
\ell^3_{ij}=\begin{cases}
    \frac{1}{2} \pm wind, \quad j=i \pm 1, i \in \tilde{\mathcal{J}}\\ 
     \frac{1}{2},  \quad j=i \pm 1, i \notin \tilde{\mathcal{J}}.  
\end{cases}
\]
The diagonal elements are set to ensure that the sum of each column is zero, maintaining the balance between pollution entries and exits from each node. Boundary conditions are applied as needed but are not explicitly stated here.\\ 
Another advantage of allowing asymmetries in the $L$ matrix is that it can represent geographic areas in which pollution has difficulty entering or exit. For example, we can define two set of nodes  $\tilde{\mathcal{I}},\tilde{\mathcal{J}}  \subseteq \{1 \ldots, n\}$ and construct $L^4 =(\ell^4_{ij})$ such that 
\[
\ell^4_{ij}=\begin{cases}
     \zeta \ell_{i,j}, \quad j \in \tilde{\mathcal{J}}, i \in \tilde{\mathcal{I}}, \\
    \ell_{ij},\quad \text{otherwise.}
\end{cases}
\]
We will set the $\ell_{i,j}$, $i \ne j$ as in $L^1$, with the diagonal values $\ell_{i,i,}$ such that the sum of each column is zero. Depending on the value of $\zeta \ge 0$, pollution flows are restricted or enhanced.\\ 
For the calibration of the other parameters, we will let $\delta_i$ live in a range from $30\%$ to $50\%$ per year.  We will make this parameter vary across the nodes to reflect geographical disparities between urban and rural areas, as well as other factors that may create geographically advantaged or disadvantaged regions. \\ 
%While human activity is a major source of pollutant emissions, natural processes contribute significantly to reducing their impact on the environment: chemical reactions help break down pollutants, and atmospheric dispersion gradually dilutes concentrations. Vegetation plays an important role too, as it absorbs carbon dioxide, while wetlands help filter out other contaminants. These decay processes, combined with geographic and climatic conditions, establish a natural decay rate for various pollutants, which can be calibrated to reflect the many local environmental factors. Referring to \cite{BFFGEJOR}, we will let $\delta = (\delta_i)$ live in a range from $30\%$ to $50\%$ per year. We will allow this parameter to vary across the nodes to reflect geographical disparities between urban and rural areas, as well as other factors that may create geographically advantaged or disadvantaged regions. \\  
In \cite{BFFGEJOR}, the productivity factor $a^I$ is calibrated to achieve an investment-to-gross domestic product ratio within $15-40\%$, which aligns with typical economic ranges. Here, because of the presence of two forms of investment contributing to the GDP, such reasoning is not as straightforward.  Furthermore, the productivity of renewable investments is affected by several factors, including the kind of process taken into consideration, the geographic location, the type of energy employed and so on. Here we set $a^I$ as in \cite{BFFGEJOR}, thus in the range $[2.5,6.5]$ and $a^R\in [1,2.75]$, whereby setting $a^R=1$ we will study the case in which no sustainable investment is taken into account. 
\\
The environmental awareness $\omega$, representing the agent's awareness of pollution levels and their impact on health, environment and overall quality of life, is set to a baseline value of $\omega=1$ and can be interpreted as either a social factor or an implicit tax on pollution.%, as it is natural to expect higher values in heavily urbanized areas due to more visible pollution effects and media coverage. 
\\
The discount factor is taken equal to $3\%$. %, the depollution efficiency parameters $\varphi$ and $\theta$ are taken constant among the nodes and chosen to be equal to $\varphi=0.3$ or $\varphi = 0.11$ and $\theta=0.2$.Note that $\varphi$ and $\theta$ must be chosen carefully, as selecting them too high relative to the other parameters may violate condition \eqref{assumption:renewable}, thus affecting the positivity of pollution as well. 
Initial condition of pollution, $p$ is chosen constant in space, $p_i =1$ $\forall i\in\cV$. Finally, we fix the inverse of the inter-temporal elasticity of substitution, which measures the regulator's aversion to inequality in consumption,  to be equal to $\gamma=0.5$. This choice implies a relatively low aversion to consumption inequality and reflects a social planner who is willing to tolerate some fluctuations in consumption to prioritize other objectives, such as optimizing investments and reducing pollution.   \\ 
It is quite challenging to choose numerical values for $\varepsilon_i$ and $\lambda_i$ as, once again, empirical data on renewable impact per unit investment varies significantly across investment types, geographical regions, employed energy and so on. For $\varepsilon$, which measures the impact of renewable production on emissions, we choose $\varepsilon_i=0.1$, to reflect that renewable production contributes less to pollution compared to traditional sources, whose impact is normalized, and so it is $1$. For $\lambda_i$, which represents the costs of (carrying, maintaining, installing, etc.) renewable investments, we set a value of $\lambda_i =1$.  %For $\lambda$, which represents the costs of (carrying, maintaining, install, etc.) the renewable investments, we experiment with values in the range  $[0.01,10]$ to capture low and high-cost scenarios. %This wide range allows us to examine the impact of different financial barriers on optimal investment strategies and how these may influence the model's drive toward renewable energy adoption.
\\
The following table summarizes the parameter values that are kept constant across all simulations, while the particular choices for the remaining parameters are provided in the captions of the corresponding figures.
 \begin{table}[h!]\label{tab:}
 \begin{center}
\begin{tabular}{|l|l|}
\hline
Parameter                    & Value                                                                            \\ \hline

$n$                   & $21$ \\ \hline
$\omega_i$                   & $1$ \\ \hline

%$\theta_i$                     &  $0.2$                                                           \\ \hline

$\varepsilon_i$              & $0.1$ \\ \hline
$\lambda_i$ & $1$ \\ \hline
$\rho$ & $0.03$ \\ \hline
$\gamma$                     & $0.5$  \\                         \hline
\end{tabular}
\caption{ Fixed Parameter Values for All Simulations}
 \end{center}
\end{table}
\subsection{Numerical Experiments}
In the following figures, we illustrate how optimal investments, long-term pollution levels, production, consumption and emissions vary by location. Non-renewable investments are always shown in black, renewable investments in green, and pollution reduction investments in red. When necessary, we also include plots of the model parameters that vary between nodes, such as natural decay and technological productivity. \\
We begin our analysis by investigating how the inclusion of renewable investment affects the optimal investment choices and, consequently, all the other quantities within the model. Figure \ref{fig:1} illustrates two scenarios: the straight lines represent the case where renewable productivity $a^I_i$ is set equal to $1$ across all nodes, which corresponds to a scenario with no renewable investments. In contrast, the dashed lines represent the case where renewable productivity exceeds $1$ on some nodes, leading to positive renewable investments. The particular choice of parameters is summarized in the caption. In the first subplot, the non-renewable productivity is shown in black, while the renewable productivity is represented in green. The straight line corresponds to $a^R_i=1$ on all nodes, while the dashed line addresses the case where $a^R_i \ge 1$ on the nodes, with a peak of $a^R_i = 2.75$ at the central node and a gradual decrease towards the outer nodes. In each of the subplots, we report the behaviour of some of the model variables on the nodes. We observe that the optimal renewable investment $R$  follows the spatial distribution of $a^R$, showing a peak at the central node. When renewable investments are introduced, non-renewable investment $I$ decreases, with a peak of $-25\%$ at the central node. Emissions and long-term pollution levels decrease in regions where renewable productivity exceeds 1. Specifically, long-term pollution is reduced by up to $20\%$ at the central node. Production increases as a productivity source is added to the model, while the optimal consumption level remains unchanged. 
\begin{figure}[H]
    \centering
\includegraphics[width=0.7\linewidth]{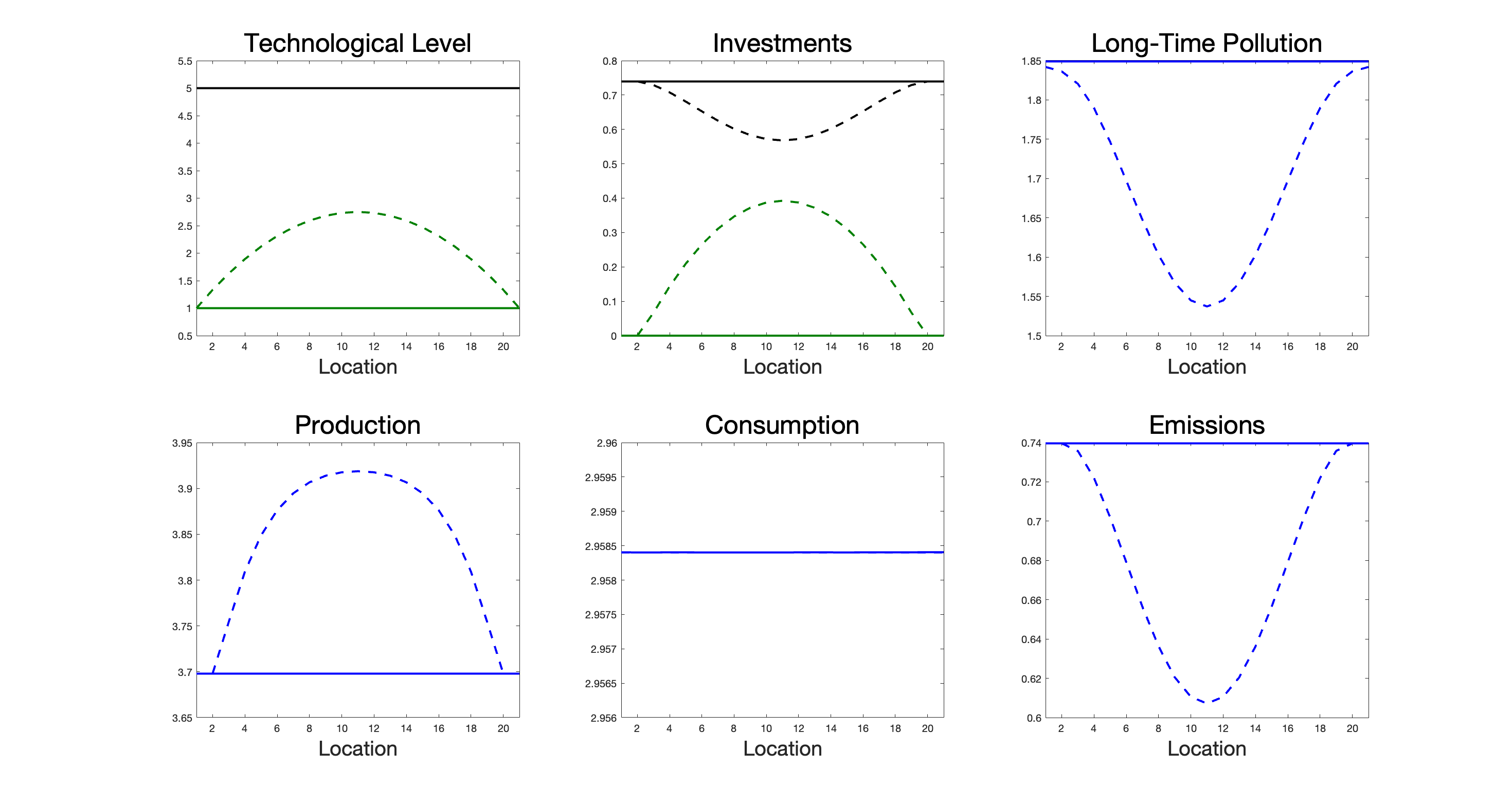}
    \caption{Impact of renewable investments on optimal investment choices, long-term pollution levels, production, consumption and emissions across nodes. The parameter values specific to this figure are:  $L=L^1$, %n=21, \delta=0.4, \rho = 0.03, \omega=1, \varepsilon=0.1, \lambda=1, \theta=0.2, \gamma=0.5$
$\delta_i=0.4$,  $a_i^I = 5$. In the straight-line scenario $a_i^R = 1$, in the dashed-lines scenario $a_i^R = 2.75$ at the core and $a_i^R = 1$ at the periphery,  $\forall \; i \in \cV$.}
    \label{fig:1}
\end{figure}
 Since all parameters are time-independent, by Remark \ref{remark:timeinv}, $\alpha$ is given by 
\begin{equation}
    \alpha = \int_0^\infty e^{-(\rho\mathbbm{1}-\cL^\intercal)t}\omega dt,
\end{equation}
i.e. it is the unique solution to the linear system
$
    (\rho\mathbbm{1} - \cL^\intercal)\alpha=\omega.$
\textcolor{black}{Moreover, for each $i \in \cV$, the bounds established in Proposition \ref{prop:alpha_bound} take the form:
 \begin{equation*}
        \min_{j  \in \cV} \frac{\omega_j}{\rho + \delta_j} \le \alpha_i \le \max_{j \in \cV} \frac{\omega_j}{\rho + \delta_j}
    \end{equation*}}
Notice that the $i-$th component of $\alpha$ reflects the cumulative effect of environmental awareness $\omega$ at location $i$, adjusted for how pollution spreads over time. Specifically, it reflects the combined effects of pollution dispersion across locations (through $L$), of natural decay $\delta$ and time discounting $\rho$. From the bound, it follows that, if $\omega_i$ and $\delta_i$ are constant across all nodes $ i \in \cV$, then $\alpha$ remains constant as well.  Consequently, the choice of network structure governing the pollution dispersion in space does not affect the disutility. However, allowing natural decay and/or pollution awareness to vary across nodes means that different network structures will result in different disutility values at the nodes, and consequently, this will impact the optimal strategy. \\
In Figure \ref{fig:2}, we compare two different choices for the matrix $L$, as introduced in Section \ref{calibration}. The straight lines represent the case $L=L^1$, while the dashed lines $L=L^2$. We keep all model parameters constant across nodes, except for the natural decay $\delta$, which has a minimum at the central node and maximums at the extremes, as shown in the first subplot. We observe that non-renewable investment follows the shape of the natural decay: it is lower in regions where $\delta$ is low and vice versa. In contrast, renewable investment exhibits the opposite behaviour, being higher in areas where self-cleaning capacity is weaker. Long-term pollution, emissions, production and consumption also follow the shape of $\delta$. We note that all the quantities are affected by the choice of $L$: when using the distance-based Laplacian $L^2$, differences across nodes are reduced, leading to smoother overall distributions. Specifically, the long-term pollution appears to be almost homogeneous across locations. While $L^1$ spreads pollution only to the adjacent nodes, $L^2$ allows pollution to spread broadly across the network, resulting in mitigated sharp variations and a more balanced distribution for investments, production and all the other economic factors. 
\begin{figure}[H]
    \centering
\includegraphics[width=0.7\linewidth]{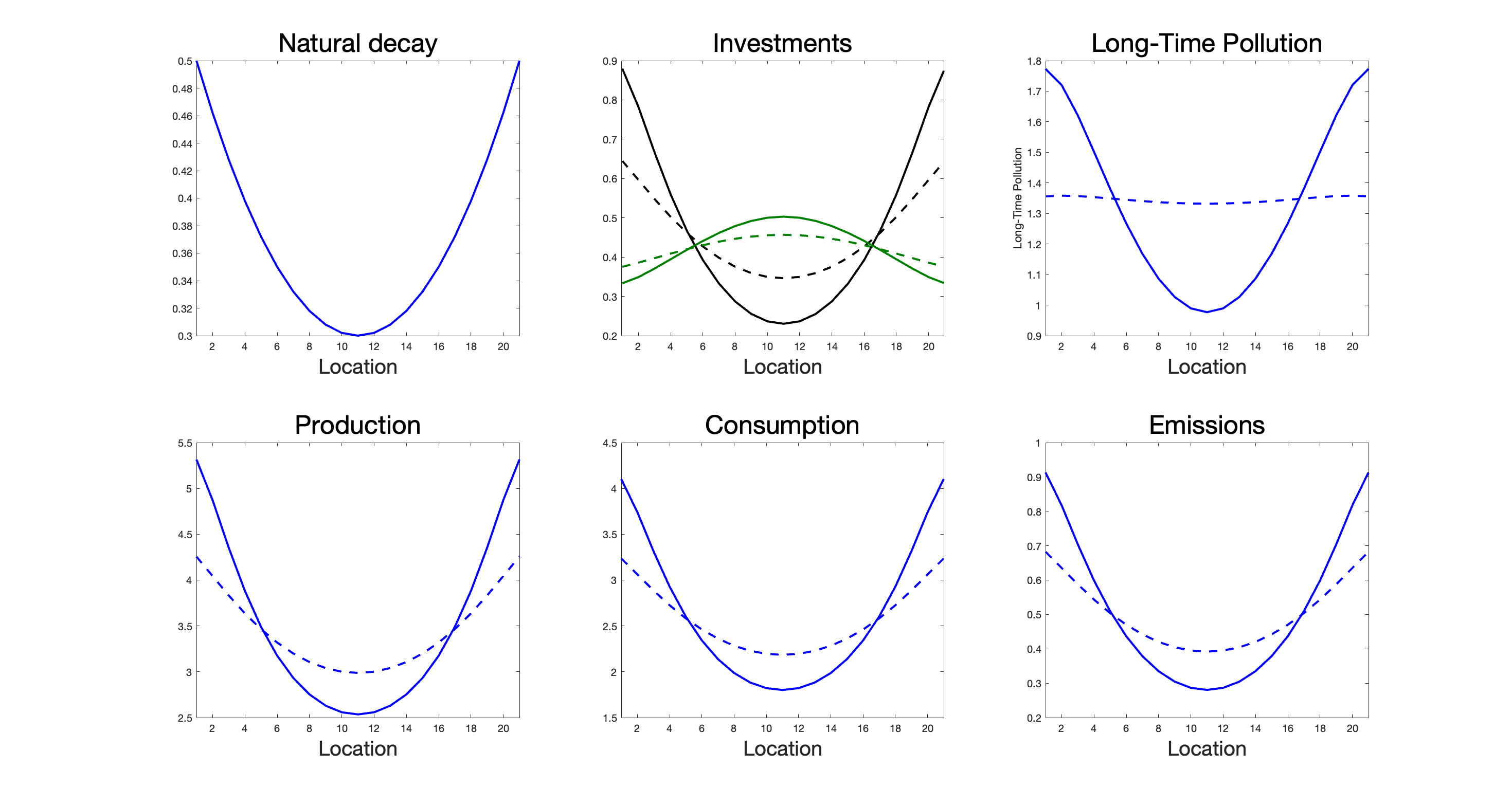}
    \caption{Comparison of optimal investment choices, long-term pollution levels, production, consumption and emissions across nodes for different choices of the operator $L$. The parameter values specific to this figure are:    %$n=21, \rho = 0.03, \omega=1, \varepsilon=0.1, \lambda=1, \theta=0.2, \gamma=4$,
     $a_i^I = 5$, $a_i^R =2.75$, $\forall\; i \in \cV$. The parameter $\delta$ varies across nodes, namely $\delta_i=0.3$ at the core and $\delta_i=0.5$ at the periphery. The straight lines represent the case $L=L^1$ and the dashed lines the case $L=L^2$.}
    \label{fig:2}
\end{figure}
In Figure \ref{fig:3}, we further investigate geographic heterogeneities by comparing the effect of $L=L^1$ and $L=L^3$, respectively, in straight and dashed lines. As presented in Section \ref{calibration}, in $L^3$ we impose a "wind" effect that influences only a subset of the nodes, namely $\mathcal{J}=\{ 14, \ldots, 19\}$. This accentuates dispersion towards higher-numbered nodes while reducing dispersion toward lower-numbered nodes. 
As a result, we observe a bottleneck effect at node $13$, where pollution accumulates. Additionally, in the wind-affected regions, the behaviour of all economic quantities is pronounced compared to the case $L= L^1$: renewable investment is higher,  while nonrenewable investment, production, consumption, and emissions are all lower. %This is due to the fact that pollution is spread more aggressively in the wind-affected area, altering investment and production decisions accordingly.
\begin{figure}[H]
    \centering
\includegraphics[width=0.7\linewidth]{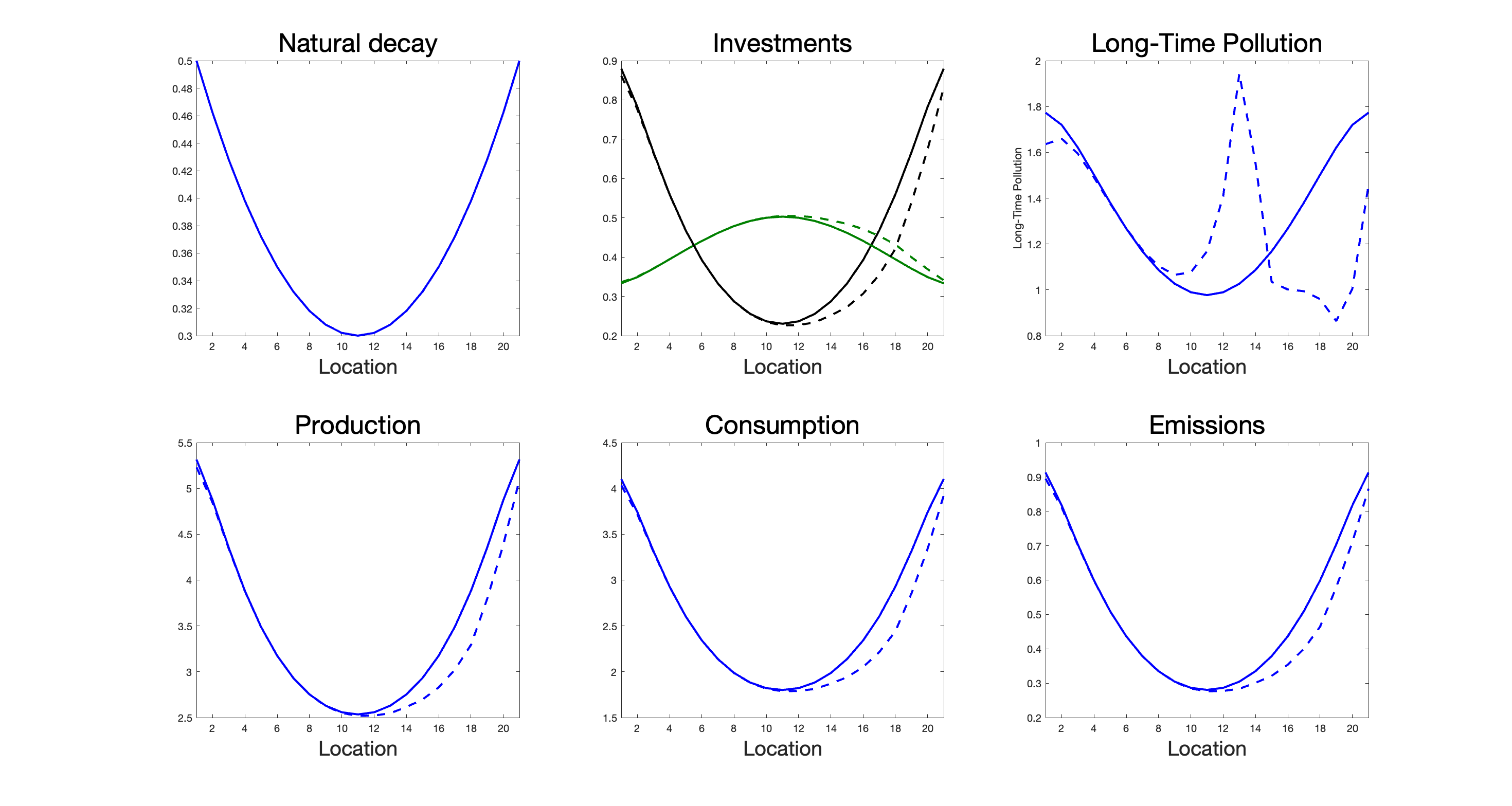}
    \caption{Comparison of optimal investment choices, long-term pollution levels, production, consumption and emissions across nodes for different choices of the operator $L$. The parameter values specific to this figure are:   %$n=21, \rho = 0.03, \omega=1, \varepsilon=0.1, \lambda=1, \theta=0.2, \gamma=4$,$ 
    $a_i^I = 5$, $a_i^R =2.75, \text{wind}=0.4$, $\forall \; i \in \cV$. The parameter $\delta$ varies across nodes, namely $\delta_i=0.3$ at the core and $\delta_i=0.5$ at the periphery. The straight lines represent the case $L=L^1$ and the dashed lines the case $L=L^3$.}
    \label{fig:3}
\end{figure}
In Figure \ref{fig:4}, we compare the choices $L=L^1$  and $L=L^4$. For $L^4$ we set $\zeta=0$, $\tilde{\mathcal{I}}= \{ 8,14\}$, $\tilde{\mathcal{J}}= \{9,13\}$, this results in a matrix where $\ell_{8,9} = \ell_{14,13} = 0$, i.e. we create a zone between nodes $9$ and $13$ where pollution cannot exits. Compared to the case $L=L^1$, this results in higher pollution levels at nodes $9$ and $13$ due to accumulation, and lower levels at $8$ and $14$. However, the changes in pollution are localized to the affected area, leading to only slight adjustments in other quantities without causing substantial re-optimization of investments or production. 
\begin{figure}[!htpb]
    \centering
\includegraphics[width=0.7\linewidth]{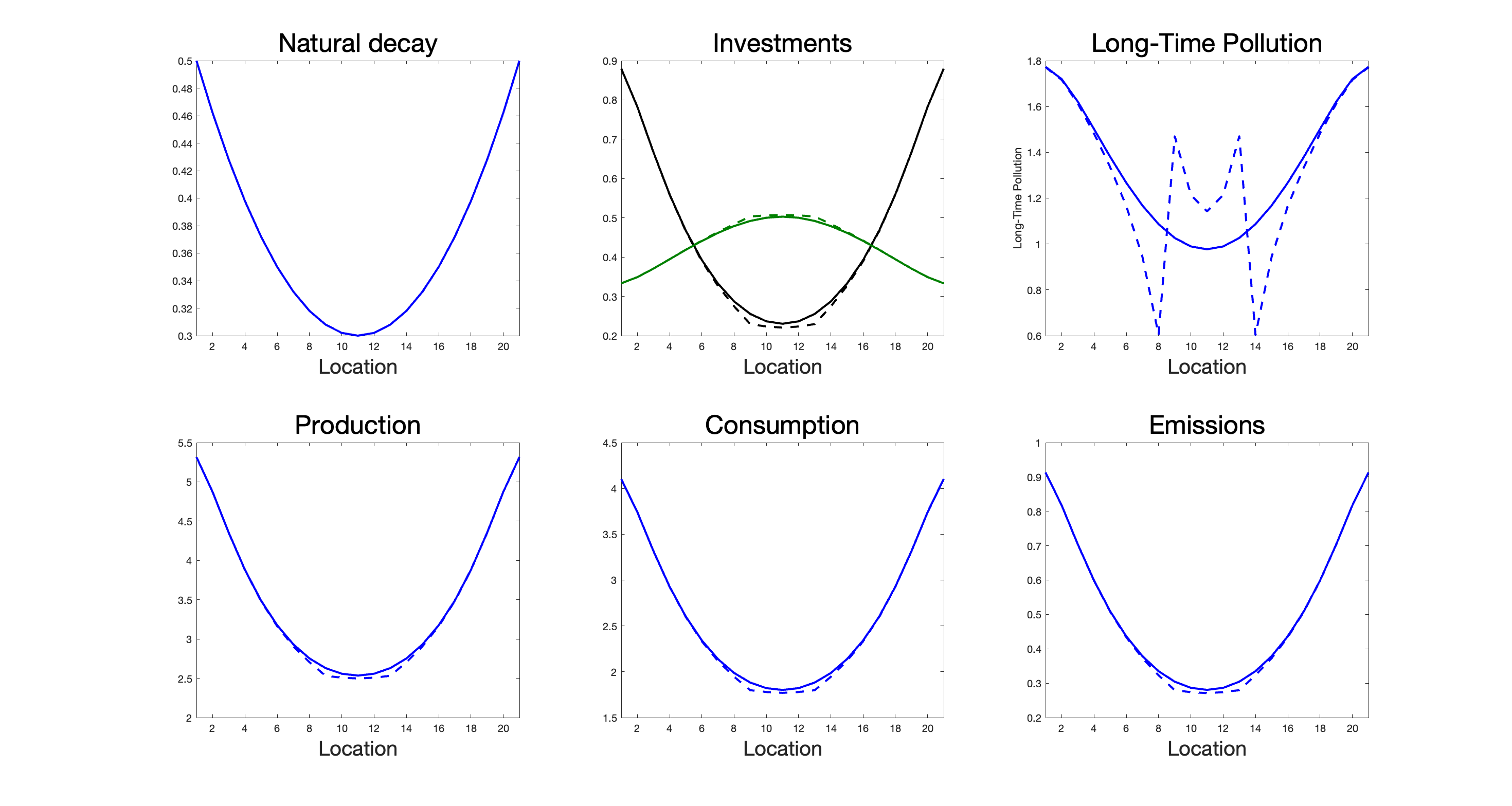}
    \caption{Comparison of optimal investment choices, long-term pollution levels, production, consumption and emissions across nodes for different choices of the operator $L$. The model parameters are:   $a_i^I = 5$, $a_i^R =2.75$, $\forall \; i \in \cV$. The parameter $\delta$ varies across nodes, namely $\delta_i=0.3$ at the core and $\delta_i=0.5$ at the periphery. The straight lines represent the case $L=L^1$ and the dashed lines the case $L=L^4$.}
    \label{fig:4}
\end{figure}
We conclude this analysis by introducing an additional form of heterogeneity across the nodes: spatial discrepancy in input productivity. To capture this, we vary the values of $a^I$ and $a^R$ across the nodes, allowing for different levels of productivity in each location. In Figure \ref{fig:5}, we compare the case of geographical heterogeneity already addressed in Figure \ref{fig:2} with the case where non-renewable productivity is unevenly distributed across locations, concentrating in the central nodes to create an economically advantaged zone. It is interesting to notice that the geographic heterogeneity alone (straight lines) leads to a natural sorting where renewable investment dominates in geographically disadvantaged zones and non-renewable flourishes where pollution dissipates faster. Adding economic heterogeneity (dashed lines) completely overrides this effect: non-renewable productivity is now higher in the core, and consequently, it becomes more economically attractive to invest there, despite the higher pollution. %This leads to more investment in depollution efforts (in red) to counteract the pollution buildup in the core. 
\begin{figure}[H]
    \centering
\includegraphics[width=0.7\linewidth]{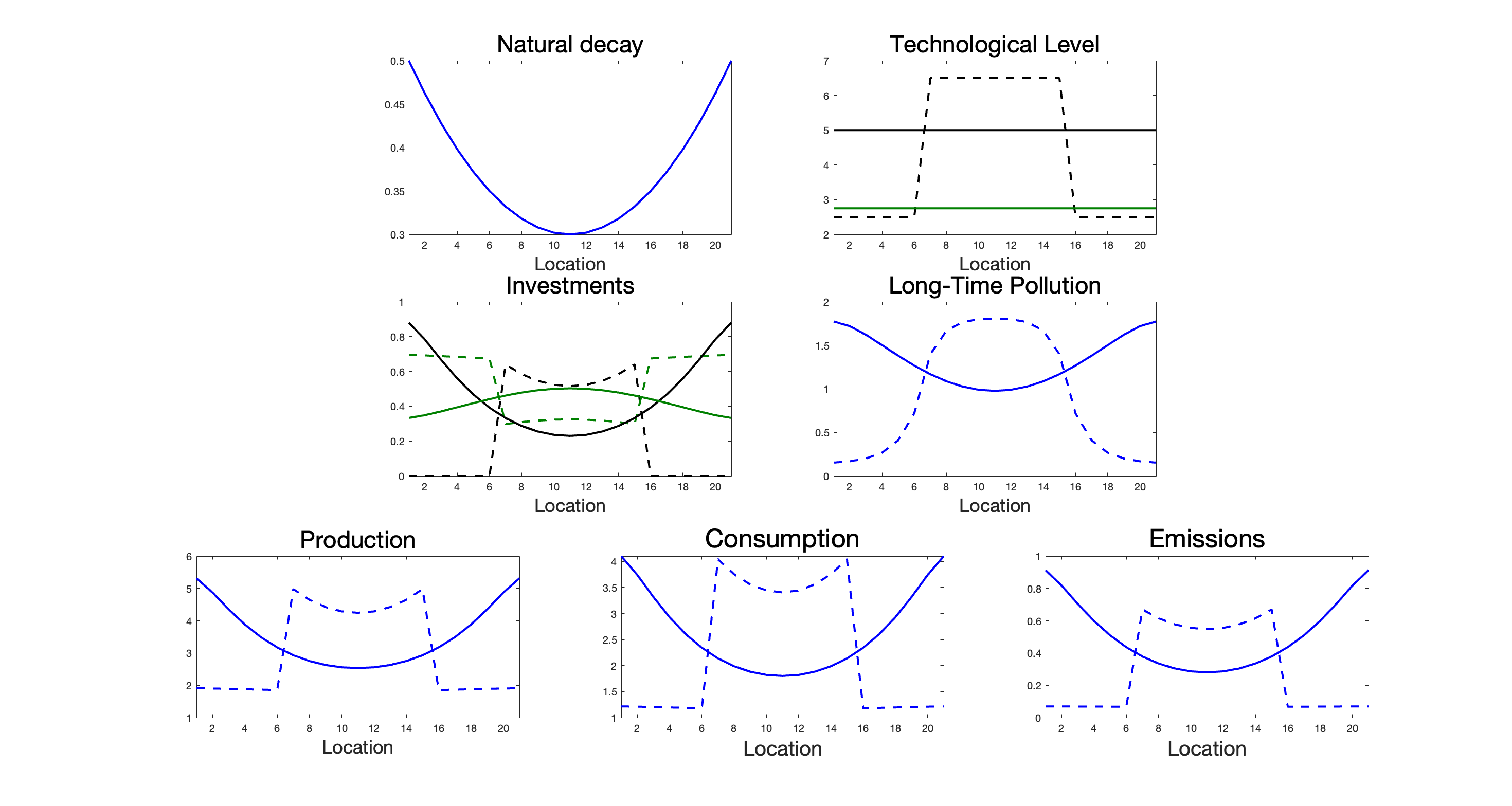}
    \caption{Comparison of optimal investment choices, long-term pollution levels, production, consumption and emissions across nodes.  The parameter values specific to this figure are:  %$ L=L^1, n=21, \rho = 0.03, \omega=1, \varepsilon=0.1, \lambda=1, \theta=0.2, \gamma=4,$ 
     $a_i^R=2.75$, $\forall \; i \in \cV$. The parameter $\delta$ varies across nodes, namely $\delta_i=0.3$ at the core and $\delta_i=0.5$ at the periphery. The straight lines represent the case in which $a^I_i = 5$, $\forall \; i \in \cV$, and the dashed lines the case in which $a^I_i = 6.5$ on the nodes from $7$ to $15$ and $a^I_i = 2.5$ on the other nodes.}
    \label{fig:5}
\end{figure}
However, in the simulations presented, we observe that the model parameters do not have the same impact on the optimal strategies. In particular, model parameters such as $\delta_i, a_i^I$, and $a_i^R$ fundamentally determine the shape of the optimal strategies, whereas the parameter $\alpha_i$ has a second-level impact on the shape of the optimal strategies. This is a consequence of the fact that pollution appears linearly in the objective function.

\section{Conclusions and directions of further research}
In this paper, we introduce a spatio-temporal model where time is treated as a continuous variable while space is represented discretely through a network structure. Within this framework, we analyse both green and brown investments under the assumption of transboundary pollution.\\
By making certain simplifications regarding the dependence of pollution on the objective function, specifically, assuming a linear relationship between pollution levels and the social planner decisions, we derive explicit analytical controls and pollution trajectories. Beyond these mathematical insights, we explore optimal policy implications in the presence of both geographic and economic heterogeneity. In particular, we examine how optimal investment strategies evolve when a new renewable investment option is introduced. We also investigate disparities in pollution decay rates, variations in the diffusion matrix (to account for different pollution flow dynamics), and differences in input productivity, along with their combined effects.\\
However, several aspects remain unexplored in the current model and deserve further study.\\
Observe that, since the regulator is internalizing pollution flow and also that production is generating pollution, we can say that the free riding is prevented. It could be interesting to study the strategic perspective, (see \cite{DeFrutos1, DeFrutos2,BFFGGEB} for some investigations in this directions and also \cite{la2021transboundary,augeraud2025optimal} for the global and local dimension of policies).\\
To enhance realism, a future direction of the work could be to add the possibility that investment at any given moment influences future production. This would mean adding another state variable, the capital, whose state equation describes its accumulation. This would add complexity, in particular if we allow the regulator to manage production exchanges between network nodes. This is part of our current projects.
\\
Another promising direction involves establishing a rigorous mathematical connection between continuous-space models and network-based models (see \cite{DeFrutos2} for work in similar direction). Such theoretical results could serve as a foundation for approximating continuous spatial dynamics using discrete schemes or vice versa.

\appendix

\section{Proofs}

\subsection{Proof of Proposition \ref{welldef}}    \label{appendix:prop_welldef}
\begin{proof}
    The term $ \frac{((A^I(t)-1)I(t)+(A^R(t)-1))^{1-\gamma}}{1-\gamma}$ in \eqref{19} is always either positive (if $\gamma \in (0,1)$) or negative (if $\gamma >1$). Since by the definition of $\cA(p)$, the map $t \to e^{-\rho t}\|f(R_t)\|$ is integrable, it suffices to show that $\int_0^t e^{-\rho t}\langle \omega,P(t) \rangle dt$ is well defined and finite. We have 
    \begin{equation*}
        \int_0^\infty e^{-\rho t} \langle \omega,P(t) \rangle dt=\int_0^\infty e^{-\rho t} \langle \omega, \Phi(t,0) p + \int_0^t\Phi(t,s)N(s)ds  \rangle dt.
    \end{equation*}
    Now since $\omega, p \in \bR^n$ and $\|\Phi(t,s)\| \le 1$ , the first integral term is finite. Moreover for $T>0$ we get, by Fubini-Tonelli's theorem: 
        \begin{equation*}
        \begin{aligned}
             \int_0^T\left( \int_0^t e^{-\rho t}\langle \omega, \Phi(t,s) N(s) \rangle ds \right ) dt &= \int_0^T\left( \int_0^t e^{-\rho s}\langle \omega, e^{-\rho(t-s)}\Phi(t,s)N(s) \rangle ds \right ) dt\\
             &= \int_0^T e^{-\rho s} \langle \omega, N(s)\int_s^T  e^{-\rho(t-s)}\Phi(t,s)dt \rangle ds.
        \end{aligned}
    \end{equation*}
   By sending $T$ to $+\infty$ and recalling that $\| \Phi(t,s) \| \le 1$, the integral $\int_s^\infty e^{- \rho (t-s)} \Phi(t,s) dt$ is bounded. Since $\int_0^\infty e^{-\rho s}\|N(s)\|ds < \infty$, the entire expression is finite.
    \end{proof}
\textcolor{black}{ \subsection{Proof of Proposition \ref{prop:alpha_bound}} \label{appendix:alpha_bound}
\begin{proof}
Define the column sum of the state transition matrix for fixed $j$, $c_j(t) = \sum\nolimits_{i=1}^n \Phi_{i,j}(t,s),$ where  \(\Phi_{i,j}(t,s)\) denotes the \((i,j)\)-th entry of  \(\Phi(t,s)\). By differentiating with respect to $t$:
\begin{equation*} \begin{aligned}
    \frac{d}{dt}c_j(t) &= \sum\nolimits_{i=1}^n \frac{d}{dt}\Phi_{i,j}(t,s) = \sum\nolimits_{i=1}^n \sum\nolimits_{l=1}^n \cL_{i,l}\Phi_{l,j}(t,s) =   \sum\nolimits_{l=1}^n \sum\nolimits_{i=1}^n \cL_{i,l}\Phi_{l,j}(t,s) \\
    &= \sum\nolimits_{l=1}^n \Phi_{l,j}(t,s) \sum\nolimits_{i=1}^n (L_{i,l}(t) - \delta_l(t)\mathbbm{1}_{\{l = i\}}) =  - \sum\nolimits_{i=1}^n  \Phi_{l,j}(t,s) \delta_l(t),
\end{aligned}\end{equation*}
where we used that the sum over columns of $L$ is zero. By calling  $\delta_{\min} = \min_{j \in \cV} \inf_{ s \ge 0} \delta_j(s)$ and $ \delta_{\max} = \max_{j \in \cV} \sup_{ s \ge 0} \delta_j(s)$ it follows 
that $$ - \delta_{\max} c_j(t) \le \frac{d}{dt} c_j(t) \le - \delta_{\min}c_j(t)$$with initial condition
$c_j(s) = \sum\nolimits_{i=1}^n \Phi_{i,j}(s,s) = \sum\nolimits_{i=1}^n I_{i,j} = 1.
$ 
By the comparison principle for ODEs, this implies that
\[
e^{- \delta_{\max}(t-s)} \le c_j(t) \le e^{-\delta_{\min}(t-s)}.
\]
Now consider the $j-$th element of the $\alpha$ vector: 
\begin{equation*}
    \begin{aligned}
    \alpha_j(s) &= \int_s^\infty e^{- \rho (t-s)} [ \Phi^\intercal(t,s) \omega ]_j  dt  =  \int_s^\infty e^{- \rho (t-s)} \sum\nolimits_{i=1}^n \Phi_{i,j}(t,s)  \omega_i dt. \\ 
        \le& \max_{i \in \cV} \omega_i\int_s^\infty e^{- \rho (t-s)} \sum\nolimits_{i=1}^n \Phi_{i,j}(t,s)    = \max_{i \in \cV} \omega_i  \int_s^\infty e^{- \rho (t-s)} c_j(t)  dt \\\le&  \max_{i \in \cV} \omega_i  \int_s^\infty e^{- (\rho+ \delta_{\min}) (t-s)}   dt = \frac{\max_{i \in \cV} \omega_i }{\rho + \delta_{\min}}.
    \end{aligned}
\end{equation*}
The other bound comes symmetrically. 
\end{proof}
}

\subsection{Proof of Proposition \ref{prop:J_rewritten}} \label{appendix:J_rewritten}
\begin{proof}
    Using \eqref{18}, we can rewrite the second term of the functional \eqref{19} in a more convenient way
    \begin{equation}
       \begin{aligned}
            \int_0^\infty e^{-\rho t} \langle w,P(t) \rangle dt&=\int_0^\infty e^{-\rho t} \langle \omega, \Phi(t,0)p + \int_0^t \Phi(t,s)N(s)ds  \rangle dt\\&=\langle \omega, \int_0^\infty e^{-\rho t}\Phi(t,0)pdt\rangle + \int_0^\infty e^{-\rho t}\langle \omega, \int_0^t\Phi(t,s)N(s)ds\rangle dt.
       \end{aligned}
       \label{24}
    \end{equation}
    Note that the first term of the right-hand side is the only one which depends on the initial datum $p$. By \eqref{16}, the first term can be rewritten as 
    \begin{equation*}
        \langle \omega, \int_0^\infty e^{-\rho t}\Phi(t,0)pdt\rangle = \langle  \int_0^\infty e^{-\rho t}\Phi^\intercal(t,0)\omega dt ,p \rangle = \langle \alpha(0), p \rangle.
    \end{equation*}
    The second term in \eqref{24} can be rewritten by exchanging the integrals as: 
    \begin{equation*}
        \begin{aligned}
            \int_0^\infty e^{-\rho t}\langle \omega, \int_0^t \Phi(t,s)N(s)ds\rangle dt &= \int_0^\infty \left ( \int_0^t e^{-\rho t}\langle \omega, \Phi(t,s)N(s)\rangle ds \right ) dt = \\ = \int_0^\infty \left ( \int_0^t e^{-\rho s}\langle \omega, e^{-\rho(t-s)}\Phi(t,s)N(s)\rangle ds \right ) dt &= \int_0^\infty e^{-\rho s}\left < \omega, \int_s^\infty e^{-\rho(t-s)}\Phi(t,s)N(s)dt \right > ds \\ =  \int_0^\infty e^{-\rho s}\left < \int_s^\infty e^{-\rho(t-s)}\Phi^\intercal(t,s)\omega dt, N(s)\right > ds &=  \int_0^\infty e^{-\rho s}\left < \alpha(s), N(s)\right > ds
        \end{aligned}
    \end{equation*}
\end{proof}
\subsection{Proof of Theorem \ref{cor:J_rewritten}} \label{appendix:cor_J_rewritten}
\begin{proof}
    This is a straightforward consequence of Proposition \ref{prop:J_rewritten}. Indeed, according to the reformulation presented in \eqref{J} the problem is reduced to a static one because the integral in \eqref{J} can be optimized pointwise, fixed time $t\in\mathbbm{R}$ and fixed $i\in \cV$. The objective function can be rewritten as
    \begin{equation}\label{J_node}
    J(p, (I,R)) = -\langle \alpha(0),p\rangle +\sum\nolimits_{i=1}^n\int_0^{+\infty} e^{-\rho t} F_i(I_i(t),R_i(t)) dt, 
    \end{equation}
    where $F_i$ is defined in \eqref{eq:F}.
    If $(I^*_i(t),R^*_i(t))$ is a maximum of the function $F_i$ that is integrated in time,  then $(I^*(t),R^*(t))$ is a maximum for the control problem without any constraint on the state variable and the control. If moreover $(I^*(t),R^*(t))$ belong to $\cA(p)$, then it is also optimal for {\bf(P)}.
\end{proof}

\subsection{Proof of Theorem \ref{teo:explicit_1}}\label{appendix:teo:explicit_1}

\begin{proof}
    As suggested by Theorem \ref{cor:J_rewritten}, we look for a control that is admissible and that is optimal for the function $F_i$, defined in \eqref{eq:F}. First, we observe that $F_i$ is coercive and concave on $\bR^2_+$. \\
   Indeed, if $\gamma\in(0,1)$, the utility term grows sub-linearly  and we can estimate \[F_i(I_i, R_i )\leq c_1\frac{(I_i+R_i)^{1-\gamma}}{1-\gamma}
    - \alpha_i(t) (I_i -\varepsilon_i(t) R_i ) -f_i(R_i),\]
    while if $\gamma \in(1,+\infty)$ 
    \[F_i(I_i, R_i )\leq
    - \alpha_i(t) I_i -\varepsilon_i(t) R_i-f_i(R_i).\]
    Since $F_i$ is coercive on the set $\bR^2_+$, $F_i$ admits a global maximum on $\bR^2_+$.
    It is standard to prove that the function $F_i$ is concave on $\bR^2_+$ and it is strictly concave in the inner of $\bR^2_+$. To prove the latter, one can calculate the Hessian matrix 
   \[ H_f=\left[\begin{array}{ccc}
-A\left(a_{i}^I(t)-1\right)^2 & -A\left(a_{i}^I(t)-1\right)\left(a_{i}^R(t)-1\right) \\
-A\left(a_{i}^I(t)-1\right)\left(a_{i}^R(t)-1\right) & -A\left(a_{i}^R(t)-1\right)^2-f''(R_i(t))
\end{array}\right]\]
where $A=\gamma((a_i^I(t)-1)I_i+(a_i^R(t)-1)R_i)^{-\gamma-1}$, and by the signs of all principal minors can deduce the strict concavity of the function in the inner of $\bR^2_+$.\\
We look for the maximum between points that are not differentiable and the points satisfying the Karush-Kuhn-Tucker condition. \\
First, we observe that the points where consumption is null cannot be maximum points.\\
Assume $\gamma<1$, and choose a point with consumption null, namely $(0 ,0)$. Then, we observe that by increasing $\delta$ in the direction of $I_i$, the function $F_i$ increases as well. Indeed
\[F_{i,t}\left(0,0\right)=0, \quad
F_{i,t}\left(\delta,0\right)
=\frac{\left((a_i^I(t)-1)\delta\right)^{1-\gamma}}{1-\gamma}-\alpha\delta
>F_{i,t}\left(0,0\right)
\]
for $\delta$ small enough. A similar argument holds for $\gamma>1$.\\
If condition \eqref{cond_1} holds, then the critical points of the Lagrangian function belong to the inner $\bR^2_+$. Indeed, the system 
    \begin{equation*}
    \begin{cases}
    \left((a^I_i(t)-1)I_i+(a^R_i(t)-1)R_i\right)^{-\gamma}(a^I_i(t)-1)-\alpha_i(t)=0\\
     \left((a^I_i(t)-1)I_i+(a^R_i(t)-1)R_i\right)^{-\gamma}\cdot(a^R_i(t)-1)- \varepsilon_i(t)\alpha_i(t)-f'(R_i)=0\\
    \end{cases}
    \end{equation*}
    admits solution $(I_i,R_i)=\left(\left[\left(\frac{\alpha_i(t)}{a_i^I(t)-1}\right)^{-1/\gamma}-(a_{i}^R(t)-1)R^*\right]\frac{1}{(a_{i}^I(t)-1)},(f'_i)^{-1}\left(\alpha_i(t)\left(\frac{a_{i}^R(t)-1}{a_i^I(t)-1}-\varepsilon_i(t)\right)\right)\right)$
where condition \eqref{cond_1} guarantees the positivity of $R_i$, while condition \eqref{cond_2} guarantees the positivity of $I_i$. \\
\textcolor{black}{Now we need to prove that $(R_i,I_i)$ belongs to the set of admissible controls. Specifically, the optimal net emission term can be explicitly written as
\[
N_i(t) = \alpha_i(t)^{-\frac{1}{\gamma}} (a_i^I(t) - 1)^{\frac{1-\gamma}{\gamma}} + \left[\varepsilon_i(t) - \frac{a_i^R(t) - 1}{a_i^I(t) - 1}\right] (f_i')^{-1}\left( \alpha_i(t) \left(\frac{a_i^R(t) - 1}{a_i^I(t) - 1} - \varepsilon_i(t) \right) \right).
\]
Define
$
x(t) := \alpha_i(t) \left( \frac{a_i^R(t) - 1}{a_i^I(t) - 1} - \varepsilon_i(t) \right).
$
Condition \eqref{cond_1} gives \(x(t) > 0\) and by the inverse function rule, since \(f_i'' \ge \epsilon > 0\), \((f_i')^{-1}\) is Lipschitz continuous. Together with assumption \ref{assumption:renewable} $(ii)$, this allows us to bound 
$\|N(t)\| \le C e^{g t},
$ 
and by \ref{assumption:renewable} $(iii)$ we conclude.\\
}
If \eqref{cond_3} holds, the inner of $\bR^2_+$ does not contain any critical points. In particular, the solution of the lagrangian system    
    \begin{equation*}
    \begin{cases}
    \left((a^I_i(t)-1)I_i+(a^R_i(t)-1)R_i\right)^{-\gamma}(a^I_i(t)-1)-\alpha_i(t)=0\\
     \left((a^I_i(t)-1)I_i+(a^R_i(t)-1)R_i\right)^{-\gamma}\cdot(a^R_i(t)-1)- \varepsilon_i(t)\alpha_i(t)-f'(R_i)=-\mu_{i}^R, \quad \mu_{i}^R\geq 0\\
    \end{cases}
    \end{equation*}
    is $(I_i,R_i)=\left(\left[\left(\frac{\alpha_i(t)}{a_i^I(t)-1}\right)^{-1/\gamma}\right]\frac{1}{(a_{i}^I(t)-1)},0\right)$ with $\mu_i^R=\alpha_i(t)\left(\varepsilon_i(t)-\frac{(a_i^R(t)-1)}{(a_i^I(t)-1)}\right)+f_i'(0)$.
    Condition \eqref{cond_3} guarantees that the lagrangian multiplication $\mu^R_i$ is positive \textcolor{black}{and Assumption \ref{assumption:renewable} ensures that $(I_i,R_i) \in \cA(p)$.} 
    To conclude, when condition \eqref{cond_1} is verified and condition \eqref{cond_2} is not, namely
    \begin{equation*}
        \left(\frac{\alpha_i(t)}{a_i^I(t)-1}\right)^{-1/\gamma}> (a_{i}^R(t)-1)(f'_i)^{-1}\left(\alpha_i(t)\left(\frac{a_i^R(t)-1}{a_i^I(t)-1}-\varepsilon_i(t)\right)\right)\\
    \end{equation*}
    the inner of $\bR^2_+$ does not contain any critical points and the global maximum belongs to the boundary, and since condition \eqref{cond_1} is assumed to hold, the maximum is solution of the lagrangian system
     \begin{equation*}
    \begin{cases}
    \left((a^R_i(t)-1)R_i\right)^{-\gamma}(a^I_i(t)-1)-\alpha_i(t)=-\mu_i^I\\
     \left((a^R_i(t)-1)R_i\right)^{-\gamma}\cdot(a^R_i(t)-1)- \varepsilon_i(t)\alpha_i(t)-f'(R_i)=0\\
    \end{cases}
    \end{equation*}
    However, in this case, the optimal controls are not explicitly derived. Indeed, they are solutions of the non-linear system
    \begin{equation*}
    \begin{cases}

R_i\,\,s.t.\,\, \left((a^R_i(t)-1)R_i\right)^{-\gamma}\cdot(a^R_i(t)-1)- \varepsilon_i(t)\alpha_i(t)=f'(R_i).\\
I_i=0.
    \end{cases}
    \end{equation*}
    Notice that, since the function
    $\Phi(x)= \left((a^R_i(t)-1)x\right)^{-\gamma}\cdot(a^R_i(t)-1)- \varepsilon_i(t)\alpha_i(t)-f'(x),\\
$
is monotonous strictly decreasing and $\lim_{x\to 0^+}\Phi(x)=\infty$, $\lim_{x\to+\infty} \Phi(x)=-\infty$, we conclude that the $\Phi(x)$ as only one zero and therefore $R^*_i$ is unique. \textcolor{black}{The admissibility of the optimal point follows from Assumption~\ref{assumption:renewable} (ii--iii), together with the facts that $\alpha_i(t)$ is uniformly bounded,  and $f_i' \ge \epsilon > 0$.}
\end{proof}

\subsection{Proof of Theorem \ref{teo:explicit_2}}
\label{appendix:teo:explicit_2}
\begin{proof}
    By following the same argument proposed in Theorem \ref{teo:explicit_1}, it is possible to prove that by choosing $f_i$ as linear $F_i$ is coercive on the set $\bR^2_+$, and therefore $F_i$ admits a global maximum on $\bR^2_+$.  We look for the maximum between points that are not differentiable and the points satisfying the Karush-Kuhn-Tucker condition. By adapting the argument presented in Theorem \ref{teo:explicit_1}, one can exclude points where the function $F_i$ is not differentiable. In particular, one can observe that points where consumption is null cannot be maximum points.\\
    Therefore, the global maximum is a solution of the KKT system.\\
    If condition \eqref{cond_3_linear} holds, the critical points of the Lagragian belongs to the inner of $\bR^2_+$. In particular, the system 
    \begin{equation*}
    \begin{cases}
    \left((a^I_i(t)-1)I_i+(a^R_i(t)-I_i)R_i\right)^{-\gamma}(a^I_i(t)-1)-\alpha_i(t)=0\\
     \left((a^I_i(t)-1)I_i+(a^R_i(t)-I_i)R_i\right)^{-\gamma}\cdot(a^R_i(t)-1)- \varepsilon_i(t)\alpha_i(t)-\lambda_i=0
    \end{cases}
    \end{equation*}
 admits as solutions the points of the line
\[
    (a_i^I(t)-1)I+(a_i^R(t)-1)R_i=\left(\alpha_i(t)\cdot(a_i^I(t)-1)^{-1}\right)^{-\frac{1}{\gamma}}.\\
\]
If condition \eqref{cond_1} holds, the lagrangian system
 \begin{equation*}
    \begin{cases}
    \left((a^I_i(t)-1)I_i+(a^R_i(t)-I_i)R_i\right)^{-\gamma}(a^I_i(t)-1)-\alpha_i(t)=-\mu_i^I, \quad \mu_i^I\geq 0\\
     \left((a^I_i(t)-1)I_i+(a^R_i(t)-I_i)R_i\right)^{-\gamma}\cdot(a^R_i(t)-1)- \varepsilon_i(t)\alpha_i(t)-\lambda_i=0
    \end{cases}
    \end{equation*}
 admits as solutions the point $(I_i,R_i)=\left(0,\frac{1}{(a^R_i(t)-1)}\left(\frac{\lambda_i+\alpha_i(t)\varepsilon_i(t)}{(a^R_i(t)-1)}\right)^{-\frac1\gamma}.\right)$
To conclude, if condition \eqref{cond_3} holds, the lagrangian system
 \begin{equation*}
    \begin{cases}
    \left((a^I_i(t)-1)I_i+(a^R_i(t)-I_i)R_i\right)^{-\gamma}(a^I_i(t)-1)-\alpha_i(t)=0\\
     \left((a^I_i(t)-1)I_i+(a^R_i(t)-I_i)R_i\right)^{-\gamma}\cdot(a^R_i(t)-1)- \varepsilon_i(t)\alpha_i(t)-\lambda_i=-\mu_i^R, \quad \mu_i^R\geq 0
    \end{cases}
    \end{equation*}
 admits as solutions the point $(I_i,R_i)=\left(\frac{1}{(a^I_i(t)-1)}\left(\frac{\alpha_i(t)}{(a^I_i(t)-1)}\right)^{-\frac1\gamma},0\right)$
\end{proof}
%\newpage
%    \textcolor{orange}{
%\begin{itemize}
%\item Se consideriamo il termine $\langle \omega, P(t)\rangle$ come una sorta di carbon tax, potremmo risolvere poi un problema successivo, tipo:
%   \[\min_{\omega}P^*(T)\]
%\end{itemize}
%}
\subsection{Proof of Theorem \ref{teo:P_infty}} \label{appendix:P_infty}
\begin{proof}
    Notice that in this case, the expressions of the optimal controls are time-independent, too. 
%\begin{equation*}    \begin{aligned}            &B^*(t)\equiv B^* =(\theta \varphi(A-\mathbbm{1}))^{\frac{1}{1-\theta}},\\        &I^*(t) \equiv I^*=\alpha^{-\frac{1}{\gamma}}(A-\mathbbm{1}))^{\frac{1-\gamma}{\gamma}}+(\theta\varphi^{\frac{1}{1-\theta}}(A-\mathbbm{1}))^{\frac{1}{1-\theta}}.  \end{aligned}\end{equation*}
    Let $\theta_0$ be the spectral bound of $\cL$. Since $\delta > 0$, the operator $\cL$ is strictly dissipative, hence $\theta_0 < 0.$ Let us write 
    $\cL = \cL_0 -\theta_0$ where $\cL_0 \coloneqq \cL + \theta_0,
$   and note that $\cL_0$ is dissipate by definition, hence $e^{s\cL_0}$ is a contraction. Then, we can rewrite:
    \begin{equation*}
        \begin{aligned}
            P^*(t) \coloneqq & e^{t\cL_0}e^{-\theta_0 t}p + \int_0^t e^{-\theta_0(t-s)}e^{(t-s)\cL_0}N^*ds=e^{t\cL_0}e^{-\theta_0 t}p + \int_0^t e^{-\theta_0s}e^{s\cL_0}N^*ds, \quad t\ge0,
        \end{aligned}
    \end{equation*}
    and take the limit above when $t \to \infty$. Since $e^{s\cL_0}$ is a contraction, the first one on the right-hand side converges to $0$, whereas the second one converges to 
    $    P^*_\infty \coloneqq \int_0^\infty e^{-\theta_0s}e^{s\cL_0}N^*ds, \quad t\ge0.
    $
    And we can conclude by expressing the limit $P^*_\infty$ as $P^*_\infty = (\theta_0 - \cL_0)^{-1}N^*$. I.e. $P^*_\infty$ is the solution to $(\theta_0 - \cL_0)P=N^*$ or, equivalently, to $\cL P + N^* = 0$.
    \end{proof}

\subsection{Proof of Theorem \ref{theo:3.3}} \label{appendix:3.3}
\begin{proof}
As suggested by Theorem \ref{cor:J_rewritten}, we look for a control that is admissible and that is optimal for the function $F$, defined in \eqref{eq:F}, where the control $R$ is zero, namely
\begin{align*}
     F(I)= \left<  \frac{((A^I(t)-1)I)^{1-\gamma}}{1-\gamma}, \mathbf{1} \right >
    - \langle \alpha(t), I  \rangle
=\sum\nolimits_{i=1}^n\frac{((a^I_i(t)-1)I_i)^{1-\gamma}}{1-\gamma}
    - \alpha_i(t) I_i =\sum\nolimits_{i=1}^nF_i(I_i).
     \end{align*}

    First, we need to check that $I^* \in \cA(p)$. We have 
    \begin{equation*}
        (A^I(t)-\mathbbm{1}))I^*(t)  = \left(\frac{A^I(t)-\mathbbm{1}}{\alpha(t)}\right)^{\frac{1}{\gamma}} \ge 0, \quad \forall t \in \bR_+.
    \end{equation*}
    Moreover, considering $N^*(t)$ as in \eqref{eq:N_optimal_noR} and Assumption \ref{ass2}, we get the existence of some constant $C_0 > 0$ such that $
        0 \le N_i^*(t) \le C_0e^{gt}, $ $\forall t \in \bR_+, \; \forall i \in \cV.$
    We conclude that $I^*\in \cA(p)$.\\
    Concerning optimality, as stressed in Theorem \ref{cor:J_rewritten}, the integrals in \eqref{J} can be optimized pointwisely, indeed fix $t \in \bR_+$, $i \in \cV$. \\ 
    By strict concavity of $F_i$ with respect to $I_i(t)$, the unique maximum point can be found just by first-order optimality conditions:
$
            ((a^I_i(t)-1)I_i(t))^{-\gamma}(a^I_i(t)-1)-\alpha_i(t)=0. 
            $
            
    The claim on the optimal control then follows by solving the above system, and all the remaining claims immediately follow from straightforward computations. 
\end{proof}
\nocite{*}
\bibliographystyle{apalike}
\bibliography{biblio_main}
\end{document}